\newcolumntype{L}[1]{>{\raggedright\arraybackslash}p{#1}} 
\newcolumntype{C}[1]{>{\centering\arraybackslash}p{#1}} 
\newcolumntype{R}[1]{>{\raggedleft\arraybackslash}p{#1}} 
\tikzstyle{emptyKnot}=[scale=0.8]
\tikzstyle{knot}=[circle, fill,scale=0.4]
\tikzstyle{smallKnot}=[circle, fill,scale=0.3]
\newcommand{\ut}{\mathcal{T}}
\newcommand{\bt}{\mathcal{B}}
\newcommand{\N}{\mathbb{N}}
\newcommand{\lh}{h^\ell}
\newcommand{\parent}{\operatorname{parent}}
\newcommand{\fname}[1]{#1}
\newcommand{\ffcns}{\text{fcns}}
\newcommand{\tildem}{n}
\title{A Comparison of Empirical Tree Entropies}
\author{
  Danny Hucke\and
  Markus Lohrey\and 
  Louisa Seelbach Benkner
}
\institute{
University of Siegen, Germany,
\email{\{hucke,lohrey,seelbach\}@eti.uni-siegen.de}}
\begin{document}

\maketitle

\begin{abstract}
Whereas for strings, higher-order empirical entropy is the standard entropy measure,
several different notions of empirical entropy for trees have been proposed in the past, notably label entropy,
degree entropy, conditional versions of the latter two, and empirical entropy of trees (here, called label-shape entropy).
In this paper, we carry out a systematic comparison of these entropy measures. We underpin our theoretical
investigations by experimental results with real XML data.
\end{abstract}

\section{Introduction}

In the area of string compression the notion of higher order empirical entropy yields a well established 
measure for the compressibility of a string. Roughly speaking, the $k^{th}$-order empirical entropy of a string $s$
is our expected uncertainty about the symbol at a certain position, given the $k$-preceding symbols.
In fact, except for some modifications (as the $k^{th}$-order modified empirical entropy from \cite{Manzini01}) the authors are not aware of any other empirical entropy measure for strings (``empirical'' refers to the fact
that the entropy is defined for the string itself and not a certain probability distribution on strings). For many string
compressors, worst-case bounds on the length of a compressed string $s$ in terms of the $k^{th}$-order empirical
entropy are known \cite{Gan18,Manzini01,NaOch18}. For further aspects of higher-order empirical entropy see \cite{Gagie06a}.

If one goes from strings to trees the situation becomes different. Let us first mention that the area of tree compression
(and compression of structured data in general) is currently a very active area, which is motivated by the appearance
of large tree data in applications like XML processing, see e.g. \cite{FerraginaLMM05,FerraginaLMM09,HuckeL19,Ganczorz20,HuckeLS19,JanssonSS12,LohreyMM13,Prezza20}.
In recent years, several notions of empirical tree entropy have been proposed
with the aim of quantifying the compressibility of a given tree.
Let us briefly discuss these entropies in the following paragraphs (all entropies below are
unnormalized; the corresponding normalized entropies are obtained by dividing by the tree size).

 Ferragina et al.~\cite{FerraginaLMM05,FerraginaLMM09} introduced the $k^{th}$-order label entropy $H^\ell_k(t)$ of a node-labeled unranked\footnote{Unranked means that there is no bound on the number of children. Moreover, we only consider ordered trees, where the children of a node are linearly ordered.} tree $t$.
Its normalized version is the expected uncertainty about the label of a node $v$, given the so-called {\em $k$-label-history} of $v$ which 
consists of the $k$ first labels on the unique path from 
$v$'s parent node to the root.
Note that the $k^{th}$-order label entropy  is not useful for unlabeled trees since it is independent of the tree shape.

In \cite{JanssonSS12}, Jansson et al. introduce the {\em degree entropy} $H^{\deg}(t)$, which is the (unnormalized) $0^{th}$-order empirical
entropy of the node degrees occurring in the unranked tree $t$.
The degree entropy is mainly made for unlabeled trees since it ignores node labels.
But in combination with label entropy it yields a reasonable measure for the compressibility of a tree: 
every node-labeled unranked tree of size $n$ in which $\sigma$ many different node labels occur
can be stored in $H^\ell_k(t) + H^{\deg}(t) + o(n \log \sigma)$ bits assuming that $\sigma$ is not too big; 
see Theorem~\ref{theorem-ganczorzentropybounds}.\footnote{Formally, we should always replace $\sigma$ by $\max\{2,\sigma\}$ in order
to avoid the pathological case that the term $o(n \log \sigma)$ vanishes. The same holds for terms $\log_\sigma(n)$ that will occur later.}
Note that the (unnormalized) degree entropy of a binary tree with $n$ leaves converges to $2n - o(n)$ since
a binary tree with $n$ leaves has exactly $n-1$ nodes of degree $2$. 

Recently, Ganczorz \cite{Ganczorz20} defined relativized versions of $k^{th}$-order label entropy and degree entropy:
The $k^{th}$-order degree-label entropy $H^{\deg,\ell}_k(t)$ and the $k^{th}$-order label-degree entropy $H^{\ell,\deg}_k(t)$.
The normalized version of $H^{\deg,\ell}_k(t)$ is the expected uncertainty about the label of a node $v$ of $t$, given (i) the 
$k$-label-history of $v$ and (ii) the degree of $v$, whereas the normalized version of $H^{\ell,\deg}_k(t)$ is
the expected uncertainty about the degree of a node $v$, given (i) the 
$k$-label-history of $v$ and (ii) the label of $v$.  Ganczorz \cite{Ganczorz20} proved that every node-labeled unranked tree of size $n$ can be stored in 
$H^\ell_k(t) + H^{\ell,\deg}_k(t) + o(n \log \sigma)$ bits as well as in $H^{\deg}(t) + H_k^{\deg,\ell}(t) + o(n \log \sigma)$ bits 
(again assuming $\sigma$ is not too big), see Theorem~\ref{theorem-ganczorzentropybounds}. Note that for unlabeled trees $t$, we have $H^\ell_k(t) + H^{\ell,\deg}_k(t)=H^{\deg}(t) + H_k^{\deg,\ell}(t)=H^{\deg}(t)$, which for binary trees equals the information theoretic upper bound $2n - o(n)$ (with $n$ the number of leaves).

Motivated by the inability of the existing entropies for measuring the compressibility of unlabeled binary trees, we 
introduced in \cite{HuckeLS19} a new entropy for binary trees (possibly with labels) that we called $k^{th}$-order empirical
entropy $H_k(t)$. In order to distinguish it better from the existing tree entropies we prefer the term {\em $k^{th}$-order label-shape entropy}
in this paper. The main idea is to extend $k$-label-histories in a binary tree by adding to the labels of the $k$ predecessors of a node $v$ also
the $k$ last directions ($0$ for left, $1$ for right) on the path from the root to $v$. We call this extended label history simply the 
$k$-history of $v$. The normalized version of $H_k(t)$ is the expected uncertainty about the pair consisting of the label of a node and 
the information whether it is a leaf or an internal node, given the $k$-history of the node. The main result of \cite{HuckeLS19} 
states that a node-labeled binary tree $t$ can be stored in $H_k(t) + o(n \log \sigma)$ bits using a grammar-based code based on so-called
tree straight-line programs. We also defined in  \cite{HuckeLS19}  the $k^{th}$-order label-shape entropy of an unranked node-labeled tree $t$
by taking the $k^{th}$-order label-shape entropy of the first-child next-sibling encoding of $t$.

\definecolor{dgreen}{rgb}{0,0.6,0}
\newcommand{\dgreen}{\color{dgreen}}
\begin{figure*}[t]
		\centering
		\tikzstyle{lts} = [->, >=stealth]
		\tikzstyle{state} = [inner sep = .7mm]
		\scalebox{1}{
			\begin{tikzpicture}[lts]
			
			\node [state] (1) {$H_k$};
			\node [state, above left = 3cm and 2cm of 1] (2) {$H^{\deg} + H_k^{\deg,\ell}$};
			\node [state, above right = 3cm and 2cm of 1] (3) {$H^\ell_k + H_k^{\ell,\deg}$};
			
			\draw[dgreen] (2) to[bend left=20] node[above=-.5mm]{$\forall \geq$} (3);
			\draw[red] (3) to[bend left=20] node[above=-.5mm]{$\exists \, o$} (2);
			
			\draw[red] (1) to[bend right=20] node[right=-.3mm,pos=.5]{$\exists \, o$} (2);
			\draw[red] (2) to[bend right=20] node[left=-.4mm]{$\exists \, o$} (1);
			
			\draw[red] (3) to[bend left=20] node[right=0mm]{$\exists \, o$} (1);
			\draw[red] (1) to[bend left=20] node[left=0mm]{$\exists \, o$} (3);
			\node [state, left = 1.7cm of 2] (4) {$H^{\deg} + H_k^{\ell}$};
			\draw[-,blue] (2) to node[above=0mm]{$\forall \Theta$} (4);
			\end{tikzpicture}
		}
		\caption{Comparison of the entropy notions for unranked node-labeled trees. The meaning of the red and green arrows is explained in the main text.}
		\label{fig-comparison}
	\end{figure*}
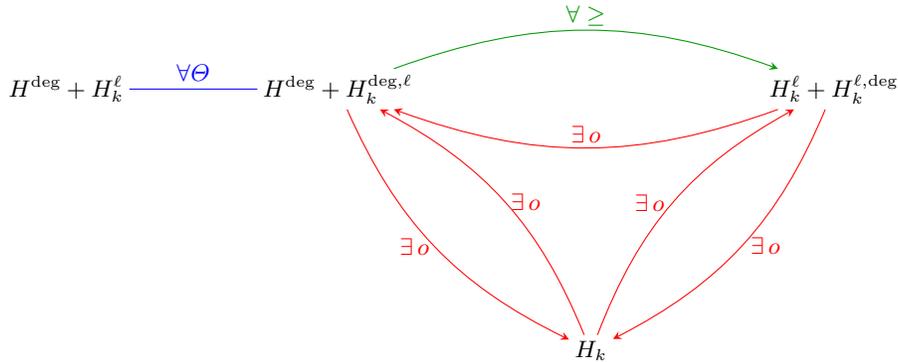

The goal of this paper is to compare the entropy variants  $H^\ell_k(t) + H^{\deg}(t)$, $H^\ell_k(t) + H^{\ell,\deg}_k(t)$, $H^{\deg}(t) + H_k^{\deg,\ell}(t)$, and $H_k(t)$.
Our results for unranked node-labeled trees are summarized in Figure~\ref{fig-comparison}.
Let us explain the meaning of the arrows in Figure~\ref{fig-comparison}: For two entropy notions $H$ and $H'$, a red arrow 
$$H \;\textcolor{red}{\xrightarrow{\;\exists \, o\;}} \;H'$$
means that there is a sequence of unranked node-labeled trees $t_n$ ($n \geq 1$) such that 
(i) the function $n \mapsto |t_n|$ is strictly increasing and (ii) $H(t_n) \leq o(H'(t_n))$
(in most cases we prove an exponential separation).
The meaning of the green arrow is that $H^{\deg}(t) + H_k^{\deg,\ell}(t) \geq  H^\ell_k(t) + H_k^{\ell,\deg}(t)$ for every unranked node-labeled tree $t$, whereas the blue edge means that 
$H^{\deg}(t) + H_k^{\deg,\ell}(t)$  and $H^{\deg}(t) + H^\ell_k(t)$ are equivalent up to fixed multiplicative constants (which are 1 and 2).

We also investigate the relationship between the entropies for node-labeled binary trees and unranked unlabeled trees
(the case of unlabeled binary trees is not really interesting as explained above). An unranked unlabeled tree $t$ of size $n$ can be represented with $H^{\deg}(t) + o(n)$ 
bits \cite{JanssonSS12}. Here, we prove that $H_k(t) \leq 2H^{\deg}(t) + 2\log_2(n)+4$.

Finally, we underpin our theoretical investigations by experimental results with real XML data from XMLCompBench\footnote{\url{http://xmlcompbench.sourceforge.net}}. For each XML we consider the corresponding tree structure $t$ (obtained by removing all text values and attributes) and compute
$H^\ell_k(t) + H^{\deg}(t)$, $H^\ell_k(t) + H^{\ell,\deg}_k(t)$, $H^{\deg}(t) + H_k^{\deg,\ell}(t)$, and $H_k(t)$.
The results are summarized in Table~\ref{table1}. 
Our experiments indicate that the upper bound on the bits needed by the compressed data structure in~\cite{HuckeLS19} is the strongest for real XML data
since the $k^{th}$-order label-shape entropy (for $k>0$) is significantly smaller than all other values for all XMLs that we have examined.

Let us remark that Ganczorz's succinct tree representations \cite{Ganczorz20} that achieve (up to low-order terms) the entropies 
$H^\ell_k(t) + H^{\ell,\deg}_k(t)$ and $H^{\deg}(t) + H_k^{\deg,\ell}(t)$, respectively, allow constant query times for a large number
of tree queries. For the entropy $H_k(t)$ such a result is not known. The tree representation from~\cite{HuckeLS19} is based
on tree straight-line programs, which can be queried in logarithmic time (if we assume logarithmic height of the grammar, which
can be enforced by \cite{GJL19}).

\section{Preliminaries}\label{sec-prelim}
In this section, we introduce some basic definitions. With $\N$ we denote the natural numbers including $0$.
Let $w = a_1 a_2 \cdots a_l \in \Gamma^*$ be a word over an alphabet $\Gamma$. With $|w|=l$ we denote the 
length of $w$. The empty word is denoted
by $\varepsilon$. 
We use the standard $\mathcal{O}$-notation. If $b>0$ is a constant, then
we just write $\mathcal{O}(\log n)$ for $\mathcal{O}(\log_b n)$. 
We make the convention that $0 \cdot \log(0) = 0$ and $0 \cdot \log(x/0)=0$ for $x \geq 0$.
We need the well-known log-sum inequality, see e.g. \cite[Theorem~2.7.1]{CoTh06}:

\begin{lemma}[Log-Sum inequality]\label{logsum}
Let $a_1, a_2, \dots, a_l,b_1, b_2, \dots, b_l \geq 0$ be real numbers. Moreover, let $a = \sum_{i=1}^l a_i$ and $b=\sum_{i=1}^l b_i$. Then
\begin{align*}
 a \log_2 \left(\frac{b}{a}\right) \geq \sum_{i=1}^l a_i \log_2\left(\frac{b_i}{a_i}\right).
\end{align*}
\end{lemma}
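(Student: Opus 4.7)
The plan is to reduce the statement to Gibbs' inequality (non-negativity of the KL divergence) after suitable normalization, with a brief check of the degenerate cases allowed by the stated conventions.

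First, I would dispose of the boundary cases. By the conventions $0 \cdot \log_2(0)=0$ and $0 \cdot \log_2(x/0)=0$, every index $i$ with $a_i=0$ contributes $0$ to the right-hand side, so I may drop such indices and assume $a_i>0$ for all remaining $i$; if this leaves no indices at all, both sides equal $0$ and the inequality holds. If some $b_i=0$ while the corresponding $a_i>0$, then that summand on the right is $-\infty$ and the inequality is trivial, so I may further assume $b_i>0$ for all $i$; in particular $b>0$ and $a>0$.

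Second, with all quantities strictly positive, set $q_i = a_i/a$ and $p_i = b_i/b$. Then $(p_i)_{i=1}^l$ and $(q_i)_{i=1}^l$ are probability distributions on $\{1,\dots,l\}$. Gibbs' inequality gives
\begin{equation*}
\sum_{i=1}^l q_i \log_2 \frac{q_i}{p_i} \;\geq\; 0,
\end{equation*}
which follows from the elementary bound $\ln x \leq x-1$: applying it to $x = p_i/q_i$ and multiplying by $q_i$ yields $q_i \ln(p_i/q_i) \leq p_i - q_i$, and summing over $i$ gives $\sum_i q_i \ln(p_i/q_i) \leq 0$, i.e. $\sum_i q_i \ln(q_i/p_i) \geq 0$.

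Third, I would unfold the normalization. Substituting back,
\begin{equation*}
0 \;\leq\; \sum_{i=1}^l \frac{a_i}{a}\log_2 \frac{a_i/a}{b_i/b} \;=\; \frac{1}{a}\sum_{i=1}^l a_i \log_2\frac{a_i}{b_i} \;+\; \log_2 \frac{b}{a},
\end{equation*}
and multiplying through by $a>0$ and rearranging yields
\begin{equation*}
a\log_2 \frac{b}{a} \;\geq\; \sum_{i=1}^l a_i \log_2 \frac{a_i}{b_i} \;=\; -\sum_{i=1}^l a_i \log_2 \frac{b_i}{a_i},
\end{equation*}
which is the desired inequality. The only real obstacle is bookkeeping the zero cases; the analytic content is a single appeal to $\ln x \leq x-1$.
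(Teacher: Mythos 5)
The paper does not actually prove this lemma; it is quoted as a known result with a citation to Cover and Thomas, whose standard proof (convexity of $t\log_2 t$ plus Jensen, equivalently non-negativity of relative entropy) is essentially the normalization argument you give, so your route is the expected one and its analytic core is fine. Two details need repair, though. First, the final display contains a sign error: from your (correct) penultimate inequality
\begin{equation*}
0 \;\leq\; \frac{1}{a}\sum_{i=1}^l a_i \log_2\frac{a_i}{b_i} \;+\; \log_2\frac{b}{a},
\end{equation*}
multiplying by $a>0$ gives
\begin{equation*}
a\log_2\frac{b}{a} \;\geq\; -\sum_{i=1}^l a_i\log_2\frac{a_i}{b_i} \;=\; \sum_{i=1}^l a_i\log_2\frac{b_i}{a_i},
\end{equation*}
which is the lemma; what you wrote, $a\log_2\frac{b}{a} \geq \sum_{i} a_i\log_2\frac{a_i}{b_i} = -\sum_i a_i\log_2\frac{b_i}{a_i}$, is a different statement, is not the desired inequality, and does not follow from the preceding line. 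Second, when you discard the indices with $a_i=0$ you silently change $b$ to $b' = \sum_{i:\,a_i>0} b_i \leq b$; the right-hand side is unaffected, but the left-hand side becomes $a\log_2(b'/a) \leq a\log_2(b/a)$, so you should add the one-line observation that $x \mapsto a\log_2(x/a)$ is nondecreasing for $a \geq 0$, whence the inequality for the reduced family implies it for the original. With these two repairs the proof is complete.
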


\subsection{Unranked trees} \label{sec-unrankedtrees}
Let $\Sigma$ denote a finite alphabet of size $|\Sigma|=\sigma$. Later, we will need a fixed, distinguished symbol from $\Sigma$ that we will denote with $\Box \in \Sigma$. Throughout the paper, we consider $\Sigma$-\emph{labeled unranked ordered trees}, where ``$\Sigma$-labeled'' means that every node is labeled by a character from the alphabet $\Sigma$, ``ordered'' means that the children of a node are totally ordered, and ``unranked'' means that the number of  children of a node (also called its \emph{degree})
can be any natural number. In particular, the degree of a node does not depend on the node's label or vice versa. Let us denote by $\ut(\Sigma)$ the set of all such trees. 
Formally, the set $\ut(\Sigma)$ is inductively defined as the smallest set of expressions such that
 if $a \in \Sigma$ and $t_1, \ldots, t_n \in \ut(\Sigma)$ then also $a(t_1\cdots t_n) \in \ut(\Sigma)$. This expression represents a tree with root $a$ whose direct subtrees are $t_1, \ldots, t_n$. Note that for the case $n=0$ we obtain the tree $a()$, for which we also write $a$.
The \emph{size} $|t|$ of $t \in \ut(\Sigma)$ is the number of occurrences of labels from $\Sigma$ in $t$,
i.e., $a(t_1\cdots t_n) = 1+\sum_{i=1}^n |t_i|$. We will identify an unranked tree as a graph with nodes and edges in the usual way, where each node is labeled with a symbol from $\Sigma$.
Let $V(t)$ denote the set of nodes of a tree $t \in \ut(\Sigma)$. We have $|V(t)| = |t|$.
The label of a node $v \in V(t)$  is denoted with $\ell(v) \in \Sigma$.
Moreover, we write $\deg(v) \in \mathbb{N}$ for the degree of $v$ (its number of children).
An important special case of unranked trees are \emph{unlabeled unranked trees}: They can be considered as labeled unranked trees over a unary alphabet (e.g. $\Sigma=\{a\}$).  

 For a node $v \in V(t)$ of a tree $t$, we define its \emph{label-history} $\lh(v) \in \Sigma^*$ inductively as follows:  
For the root node $v_0$, we set $\lh(v_0)=\varepsilon$ and for a child node $w$ of a node $v$ of $t$, we set $\lh(w)=\lh(v) \, \ell(v)$.
In other words:  
$\lh(v)$ is obtained by concatenating the node labels along the unique path from the root to $v$. Note that the symbol that labels $v$ is not part of the label-history of $v$. The $k$-\emph{label-history} $\lh_k(v)$ of a tree node $v \in V(t)$ is  defined as the length-$k$-suffix of $\Box^k \lh(v)$, where $\Box$ is a fixed dummy symbol in $\Sigma$. This means that if the depth of $v$ in $t$ is greater than $k$, then $\lh_k(v)$ describes the last $k$ node labels along the path from the root to node $v$. If the depth of $v$ in $t$ is at most $v$, then we pad its label-history $\lh(v)$ with the symbol $\Box$ such that $\lh_k(v) \in \Sigma^k$. In general, there are several possibilities how to define the $k$-label-history of nodes of depth smaller than $k$, several alternatives are discussed in \cite{Journalarxivversion}.
 
For $z \in \Sigma^k$, $a \in \Sigma$ and $i \in \mathbb{N}$ we set 
 \begin{eqnarray}
\tildem_z^t &=& |\{v \in V(t) \mid \lh_k(v) = z\}|, \label{n_z^t} \\
\tildem_{z, a}^t &=& |\{v \in V(t) \mid \lh_k(v) = z \text{ and } \ell(v) =a\}|, \label{n_z,a^t} \\
\tildem_{i}^t &=& |\{v \in V(t) \mid \deg(v) =i\}|, \label{n_i^t} \\
\tildem_{z, i}^t &=& |\{v \in V(t) \mid \lh_k(v) = z \text{ and } \deg(v) =i\}|, \label{n_z,i} \\
\tildem_{z, i,a}^t &=& |\{v \in V(t) \mid \lh_k(v) = z, \, \ell(v) = a \text{ and } \deg(v) = i\}|. \label{n_z,i,a} 
\end{eqnarray}
In order to avoid ambiguities in these notations we should assume that $\Sigma \cap \N = \emptyset$. Moreover,
when writing $n^t_{z,i}$ (resp., $n^t_{z,a}$) then, implicitly, $i$ (resp., $a$) always belongs to $\N$ (resp., $\Sigma$).

\subsection{Binary trees} \label{sec-binarytrees}

An important subset of $\ut(\Sigma)$ is the set $\bt(\Sigma)$ of \emph{labeled binary trees} over the alphabet $\Sigma$: A binary tree is a tree in $\ut(\Sigma)$, where every node has either exactly two children or is a leaf. 
Formally, $\bt(\Sigma)$ is inductively defined as the smallest set of terms over $\Sigma$ such that 
\begin{itemize}
\item $\Sigma \subseteq \bt(\Sigma)$ and
\item if $t_1, t_2 \in \bt(\Sigma)$ and $a \in \Sigma$, then $a(t_1, t_2) \in \bt(\Sigma)$.
\end{itemize}
An \emph{unlabeled binary tree} can be considered as a binary tree over the unary alphabet $\Sigma = \{a\}$.
The \emph{first-child next-sibling encoding} (or shortly \emph{fcns-encoding}) transforms an unranked  tree 
$t \in \ut(\Sigma)$ into a binary tree $t \in \bt(\Sigma)$. 
We define it more generally for an ordered sequence of unranked trees $s = t_1 t_2 \cdots t_n$ (a so called forest) inductively as follows 
(recall that $\Box \in \Sigma$ is a fixed distinguished symbol in $\Sigma$):
 \begin{itemize}
\item $\ffcns(s) = \Box$ for $n=0$ and
\item if $n \geq 1$ and $t_1 = a(t'_1 \cdots t'_m)$ then
$\ffcns(s)=a(\ffcns(t'_1 \cdots t'_m), \ffcns(t_2\cdots t_n))$.
\end{itemize}
Thus, the left (resp. right) child of a node in $\ffcns(s)$ is the first child (resp., right sibling) of the node in $s$ or a $\Box$-labeled leaf, if it does not exist. 

For the special case of binary trees, we extend the label history of a node to its full history, which we just call its history. 
 Intuitively, the history of a node $v$ records all information that can be
obtained by walking from the root of the tree straight down to the node $v$. In addition to the node labels this also
includes the directions (left/right) of the decending edges.
Let
\begin{align*}
\mathcal{L} = \left(\Sigma\{0,1\}\right)^* = \{a_1i_1a_2i_2\cdots a_ni_n \mid n \geq 0, a_k \in \Sigma, i_k \in \{0,1\} \text{ for } 1 \leq k \leq n \},
\end{align*}
and for an integer $k \geq 0$ let $\mathcal{L}_k = \{w \in \mathcal{L} \mid |w |= 2k\}$. For a node $v$ of a binary tree $t$, we define its \emph{history} $h(v)$ inductively as follows: For the root node $v_0$, we set $h(v_0) = \varepsilon$. For a left child node $w$ of a node $v$ of $t$, we set $h(w) = h(v) \ell(v) 0$ and for a right child node $w$ of $v$, we set $h(w) = h(v) \ell(v) 1$
(recall that $\ell(v)$ is the label of $v$). That is, in order to obtain $h(v)$, while decending in the tree from the root node to the node $v$, we alternately concatenate symbols from $\Sigma$ with bits from $\{0,1\}$ such that the symbol from $\Sigma$ corresponds to the label of the current node and the bit $0$ (resp., $1$) indicates that we decend to the left (resp., right) child node. Note that the symbol that labels $v$ is not part of the history $h(v)$. The $k$-\emph{history} of a node $v$ is then defined as the $2k$-length suffix of the word $(\Box0)^kh(v)$, where $\Box$ is again a fixed dummy symbol in $\Sigma$.
This means that if the depth of $v$ in $t$ is greater than $k$, then $h_k(v)$ describes the last $k$ directions and node labels along the path from the root to node $v$. If the depth of $v$ in $t$ is at most $k$, then we pad the history of $v$ with $\Box$'s and zeroes such that $h_k(v) \in \mathcal{L}_k$. Again, there are alternative ways how to deal with nodes of depth smaller than $k$, which are discussed in \cite{Journalarxivversion}.

For a node $v$ of a binary tree we define $\lambda(v)=(\ell(v),\deg(v)) \in \Sigma \times \{0,2\}$.
For $z \in \mathcal{L}_k$ and $\tilde{a} \in \Sigma \times \{0,2\}$, we  finally define
\begin{eqnarray}
m_z^t &=& |\{v \in V(t) \mid h_k(v) = z\}|, \label{m_z^t} \\
m_{z, \tilde{a}}^t &=& |\{v \in V(t) \mid h_k(v) = z \text{ and } \lambda(v) = \tilde{a}\}|. \label{m_z,tildea^t}
\end{eqnarray}

\section{Empirical entropy for trees}

In this section we formally define the various entropy measures that were mentioned in the introduction.
Note that in all cases we define so-called unnormalized entropies, which has the advantage that we do not
have to multiply with the size of the tree in bounds for the encoding size of a tree. Note that in \cite{FerraginaLMM05,FerraginaLMM09,Ganczorz20,JanssonSS12}
the authors define normalized entropies. In each case, one obtains the normalized entropy by dividing the corresponding
unnormalized entropy by the tree size.

\subsection{Label entropy}
The first notion of empirical entropy for trees was introduced in \cite{FerraginaLMM05}. In order to distinguish the notions, we will call the empirical entropy from \cite{FerraginaLMM05} \emph{label entropy}. 
It is defined for unranked labeled trees $t \in  \ut(\Sigma)$: The $k^{th}$-order \emph{label entropy} $H^\ell_k(t)$
of $t$ is defined as follows, where $\tildem_z^t$ and $\tildem_{z,a}^t$ are from \eqref{n_z^t} and \eqref{n_z,a^t}, respectively:
\begin{equation} \label{labelentropy}
H_k^{\ell}(t) = \sum_{z \in \Sigma^k}\sum_{a \in \Sigma}\tildem_{z,a}^t\log_2\left(\frac{\tildem_z^t}{\tildem_{z,a}^t}\right).
\end{equation}
We remark that in \cite{FerraginaLMM05}, it is actually not explicitly specified how to deal with nodes, whose label-history is shorter than $k$. 
There are three natural variants: 
\begin{enumerate}[(i)]
\item padding the label-histories with a symbol $\Box \in \Sigma$ (this is our choice), 
\item padding label-histories with a symbol $\diamond \notin \Sigma$, or equivalently, allowing label-histories of length smaller than $k$, and 
\item ignoring nodes whose label-history is shorter than $k$.
\end{enumerate}
However, similar considerations as presented in the appendix of \cite{Journalarxivversion} show that these approaches yield the same $k^{th}$-order label entropy
up to an additional additive term of at most $m^{\scriptscriptstyle{<}} (1+1/\ln(2)+ \log_2(\sigma |t|/m^{\scriptscriptstyle{<}}))$, where $m^{\scriptscriptstyle{<}}$ is the number of nodes at depth less than $k$ in $t$.

Moreover, we remark that in the original paper on label entropy \cite{FerraginaLMM05},
the authors quite often assume disjoint label alphabets for inner nodes and leaves, i.e., inner nodes are labeled with symbols from an alphabet $\Sigma_1$ while
leaves are labeled with symbols from an alphabet $\Sigma_2$ with $\Sigma_1 \cap \Sigma_2 = \emptyset$. 
We will not make this assumption in the following.

\subsection{Degree entropy}

Another notion of empirical entropy for trees is the entropy measure from \cite{JanssonSS12}, which we call \emph{degree entropy}. Degree entropy is primarily made for unlabeled unranked trees, as it ignores node labels. Nevertheless the definition works for trees $t \in \ut(\Sigma)$ over any alphabet $\Sigma$. For a tree $t \in \ut(\Sigma)$, 
the degree entropy $H^{\deg}(t)$ is the $0^{th}$-order entropy of the node degrees ($n^t_i$ is from \eqref{n_i^t}):
\begin{align*}
H^{\deg}(t) = \sum_{i=0}^{|t|} n_i^t \log_2 \left(\frac{|t|}{n_i^t}\right).
\end{align*}
Note that this definition completely ignores node labels.  For the special case of unlabeled trees the following result was shown in 
\cite{JanssonSS12}:

\begin{theorem}[\mbox{\cite[Theorem~1]{JanssonSS12}}]\label{theorem-degentropybound}
Let $t$ be an unlabeled unranked tree. Then $t$ can be represented in 
$
H^{\deg}(t) + \mathcal{O}(|t| \log \log (|t|)/\log |t|)
$
many bits. 
\end{theorem}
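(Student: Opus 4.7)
The plan is to reduce the tree encoding problem to encoding a single integer sequence, to which a standard zero-th order entropy coder is applied.

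First, I would observe that an unlabeled unranked ordered tree $t$ with $n = |t|$ nodes is uniquely reconstructible from its preorder degree sequence $d_1, d_2, \ldots, d_n \in \N$, because one can simulate a DFS with a stack where each $d_i$ specifies how many children to push. Hence $t$ is determined by the sequence together with $n$. Moreover, by unfolding the definition $H^{\deg}(t) = \sum_i n_i^t \log_2(n/n_i^t)$, the degree entropy coincides with the unnormalized zero-th order empirical entropy of this sequence over the integer alphabet $\N$.

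Next, I would invoke a succinct coder for length-$n$ integer sequences that attains zero-th order entropy plus a low-order additive term. The standard recipe partitions the sequence into blocks of size $L = \Theta(\log n / \log\log n)$, represents each block by its rank within a canonical enumeration of all length-$L$ blocks having a given symbol multiset, and stores the per-block multisets and per-block pointers via a sublinear directory. The log-sum inequality (Lemma~\ref{logsum}) is used to argue that the sum of the per-block zero-th order entropies is bounded above by $H^{\deg}(t)$, while the directory plus auxiliary tables contribute only $O(n \log \log n / \log n)$ bits.

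The main obstacle is that the degree alphabet can be as large as $n$, so lookup tables and multiset encodings need to be sized carefully. Two structural facts keep things manageable: (i) since $\sum_i i \cdot n_i^t = n-1$, at most $O(\sqrt n)$ distinct degree values can occur in $t$, and (ii) the block length $L = \Theta(\log n / \log\log n)$ ensures that the number of distinct multisets that actually appear as block contents is subpolynomial in $n$, so shared enumeration tables fit within the budget. The detailed accounting---bounding the block directory, the per-block rank encoding, the global frequency table, and the residual last block all by $O(n \log\log n / \log n)$ bits---is the main technical burden, but follows the well-established template for succinct data structures over large alphabets.
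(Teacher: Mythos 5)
First, a point of reference: the paper does not prove this statement at all --- it is imported verbatim from \cite[Theorem~1]{JanssonSS12} --- so there is no in-paper proof to compare against. Your overall reduction is nevertheless the right one and matches the spirit of the original source (which works with the DFUDS/degree sequence): writing $n=|t|$, the preorder degree sequence $d_1,\dots,d_n$ does determine $t$, $H^{\deg}(t)$ is by definition the unnormalized zero-th order entropy of that sequence, and your observation that at most $\mathcal{O}(\sqrt{n})$ distinct degrees occur (from $\sum_i i\cdot n_i^t=n-1$) is correct.

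The genuine gap is in the low-order-term accounting of your block scheme. With $L=\Theta(\log n/\log\log n)$ there are $n/L=\Theta(n\log\log n/\log n)$ blocks, so your redundancy budget is $\mathcal{O}(1)$ bits per block on average. But each block must identify its symbol multiset (or its entry in a shared table) before the rank-within-class code can be decoded, and over an alphabet with up to $\Theta(\sqrt{n})$ distinct degrees this identifier generically costs $\Theta(\log n)$ bits: even a pointer into a dictionary of only those multisets that actually occur needs up to $\log_2(n/L)=\Theta(\log n)$ bits, because up to $n/L$ distinct multisets can occur. Your supporting claim (ii) --- that the number of distinct block contents is subpolynomial --- is false in general; it is bounded only by the number of blocks, which is nearly linear. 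The resulting $\Theta(n\log\log n)$ bits exceed the claimed redundancy by a factor of $\log n$, and naive per-block pointers have the same problem. The standard repair is to abandon the multiset/rank split and code each block $B_j$ against the \emph{global} empirical distribution, i.e.\ with a codeword of length $\bigl\lceil\sum_{a}c_{j,a}\log_2(n/n_a^t)\bigr\rceil+\mathcal{O}(1)$, where $c_{j,a}$ is the number of occurrences of degree $a$ in $B_j$; these lengths sum to $H^{\deg}(t)+\mathcal{O}(n/L)$ and are decodable from a single global frequency table of $\mathcal{O}(\sqrt{n}\log n)$ bits. In fact, since the theorem as stated here asserts only an encoding length and no query support, the blocking machinery is unnecessary: arithmetic-coding the entire degree sequence under the empirical frequencies and prepending the frequency table already gives $H^{\deg}(t)+\mathcal{O}(\sqrt{n}\log n)$ bits, comfortably inside the claimed bound. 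The block decomposition is needed only for the constant-time navigation operations that \cite{JanssonSS12} additionally provide.
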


\subsection{Label-degree entropy and degree-label entropy}

Recently, two combinations of the label entropy from \cite{FerraginaLMM05} and the degree entropy from \cite{JanssonSS12} were proposed in \cite{Ganczorz20}.
We call these two entropy measures \emph{label-degree entropy}  and \emph{degree-label entropy}.
Both notions are defined for unranked node-labeled trees. Let $t \in \ut(\Sigma)$ be such a tree.
The $k^{th}$-order \emph{label-degree entropy} $H_k^{\ell,\deg}(t)$ of $t$ from \cite{Ganczorz20} is defined as follows, where $\tildem_{z,a}^t$ and $\tildem_{z,i,a}^t$ are from \eqref{n_z,a^t}  and \eqref{n_z,i,a}, respectively:
\begin{align*}
H_k^{\ell,\deg}(t)= \sum_{z \in \Sigma^k}\sum_{a \in \Sigma}\sum_{i=0}^{|t|} \tildem_{z,i,a}^t\log_2\left(\frac{\tildem_{z,a}^t}{\tildem_{z,i,a}^t}\right).
\end{align*}
The $k^{th}$-order \emph{degree-label entropy} $H_k^{\deg,\ell}(t)$ of $t$ from \cite{Ganczorz20} is defined as follows, where
$\tildem_{z,i}^t$ and $\tildem_{z,i,a}^t$ are from \eqref{n_z,i}  and \eqref{n_z,i,a}, respectively:
\begin{align*}
H_k^{\deg,\ell}(t)= \sum_{z \in \Sigma^k}\sum_{i=0}^{|t|}\sum_{a \in \Sigma} \tildem_{z,i,a}^t\log_2\left(\frac{\tildem_{z,i}^t}{\tildem_{z,i,a}^t}\right).
\end{align*}
In order to deal with nodes whose label-history is shorter than $k$ one can again choose one of the three alternatives (i)--(iii) that were mentioned after
\eqref{labelentropy}. In \cite{Ganczorz20}, variant (ii) is chosen, while the above definitions correspond to choice (i). 
However, similar considerations as presented in the appendix of \cite{Journalarxivversion} show again that these approaches are basically equivalent, except for an additional additive term of at most $m^{\scriptscriptstyle{<}} (1/\ln(2)+\log_2(\sigma |t|/m^{\scriptscriptstyle{<}}))$ in the case of the degree-label entropy, respectively, $m^{\scriptscriptstyle{<}} (1/\ln(2)+\log_2|t|)$ in the case of the label-degree entropy, where $m^{\scriptscriptstyle{<}}$ is the number of nodes at depth less than $k$.
In \cite{Ganczorz20}, the following lemma is shown, which relates the degree-label entropy to the label entropy $H_k^{\ell}(t)$ from \eqref{labelentropy}
and the label-degree entropy to the degree entropy:

\begin{lemma} \label{lemma-ganzorzentropien}
For every $t \in \ut(\Sigma)$, $H_k^{\ell,\deg}(t) \leq H^{\deg}(t)$ and $H_k^{\deg,\ell}(t) \leq H_k^{\ell}(t)$ holds.
\end{lemma}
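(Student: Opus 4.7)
The plan is to prove both inequalities by straightforward applications of the log-sum inequality (Lemma \ref{logsum}), with the only real work being the choice of which index to fix before applying it. The key observation is that in both inequalities, the left-hand side and the right-hand side differ only by a ``refinement'' of the counts by an extra coordinate, and summing the counts over that extra coordinate recovers the coarser counts. This is exactly the setting in which log-sum gives a sharp bound.

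For the inequality $H_k^{\deg,\ell}(t) \leq H_k^{\ell}(t)$, I would fix $z \in \Sigma^k$ and $a \in \Sigma$ and apply log-sum with summation index $i$, taking $a_i := n_{z,i,a}^t$ and $b_i := n_{z,i}^t$. The crucial identities to verify are
\[
\sum_{i} n_{z,i,a}^t = n_{z,a}^t \qquad\text{and}\qquad \sum_{i} n_{z,i}^t = n_z^t,
\]
which both hold because every node with label history $z$ (and, in the first sum, label $a$) has exactly one degree. Log-sum then yields, for each fixed $(z,a)$,
\[
\sum_i n_{z,i,a}^t\log_2\!\left(\frac{n_{z,i}^t}{n_{z,i,a}^t}\right)\leq n_{z,a}^t\log_2\!\left(\frac{n_z^t}{n_{z,a}^t}\right),
\]
and summing over $z$ and $a$ gives the claim.

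For the inequality $H_k^{\ell,\deg}(t)\leq H^{\deg}(t)$, the same idea works but with the fixed and summed variables swapped. I would fix the degree $i$ and apply log-sum with summation index $(z,a)\in\Sigma^k\times\Sigma$, taking $a_{(z,a)}:=n_{z,i,a}^t$ and $b_{(z,a)}:=n_{z,a}^t$. Here the identities are
\[
\sum_{z,a} n_{z,i,a}^t = n_i^t \qquad\text{and}\qquad \sum_{z,a} n_{z,a}^t = |t|,
\]
the second because every node contributes a unique $(z,a)$-pair. Log-sum then gives, for each fixed $i$,
\[
\sum_{z,a} n_{z,i,a}^t\log_2\!\left(\frac{n_{z,a}^t}{n_{z,i,a}^t}\right)\leq n_i^t\log_2\!\left(\frac{|t|}{n_i^t}\right),
\]
and summing over $i$ gives the desired bound.

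There is essentially no hard step here: once the right grouping is identified, both inequalities are immediate. The only thing one must be careful about is the convention $0\log(x/0)=0$ already fixed in the preliminaries, so that terms with $n_{z,i,a}^t=0$ (or similar) are handled consistently on both sides.
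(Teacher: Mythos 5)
Your proof is correct: both applications of the log-sum inequality are valid, and the marginalization identities $\sum_i n_{z,i,a}^t = n_{z,a}^t$, $\sum_i n_{z,i}^t = n_z^t$, $\sum_{z,a} n_{z,i,a}^t = n_i^t$ and $\sum_{z,a} n_{z,a}^t = |t|$ all hold for the reasons you give. Note that the paper itself does not prove this lemma but cites it from Ganczorz's work; your argument is exactly the log-sum technique the paper uses for its own analogous results (e.g.\ Theorem~\ref{thm-factor-2} and the binary-tree comparison lemma), so there is nothing to add.
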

Moreover, one of the main results of \cite{Ganczorz20} states the following bounds:

\begin{theorem}[\mbox{\cite[Theorem~12]{Ganczorz20}}]\label{theorem-ganczorzentropybounds}
Let $t \in \ut(\Sigma)$, with $\sigma \leq |t|^{1-\alpha}$ for some $\alpha > 0$. Then $t$ can be represented within the following
bounds (in bits):
\begin{align*}
&H^{\deg}(t) + H_k^{\ell}(t) + \mathcal{O}\left(\frac{|t|k \log\sigma+|t|\log\log_{\sigma}|t|}{\log_{\sigma}|t|}\right),\\
&H_k^{\ell,\deg}(t) + H_k^{\ell}(t) + \mathcal{O}\left(\frac{|t|k \log\sigma+|t|\log\log_{\sigma}|t|}{\log_{\sigma}|t|}\right), \\
&H_k^{\deg,\ell}(t) + H^{\deg}(t) + \mathcal{O}\left(\frac{|t|k \log\sigma+|t|\log\log_{\sigma}|t|}{\log_{\sigma}|t|}\right) .
\end{align*}
\end{theorem}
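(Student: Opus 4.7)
The plan is to give three encodings of $t$, one for each bound, all built on two reusable ingredients: a block-based statistical coder for strings that accepts externally supplied contexts, and the unlabeled-tree representation of Theorem~\ref{theorem-degentropybound}. Concretely, I would use the following Ferragina--Venturini style lemma: for a string $s_1 \cdots s_n$ over a $\sigma$-letter alphabet, with external contexts $c_i$ drawn from a set $C$, there is an encoder producing $\sum_{c \in C} H_0(s|_c) + O\bigl(n(k \log \sigma + \log \log_\sigma n)/\log_\sigma n\bigr)$ bits, where $s|_c$ is the subsequence of positions whose context equals $c$, provided $|C| \leq \sigma^{O(k)}$ and $\sigma \leq n^{1-\alpha}$. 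When the nodes of $t$ are traversed in DFS preorder $v_1, \ldots, v_n$, the $k$-label-history $\lh_k(v_i)$ is already determined by the labels of $v_i$'s ancestors, so it is available to the decoder at the moment $v_i$ must be recovered and may serve as $c_i$ without being transmitted.

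For the first bound, the plan is to encode the unlabeled shape of $t$ by Theorem~\ref{theorem-degentropybound}, costing $H^{\deg}(t) + o(|t|)$ bits, and then encode the label string $\ell(v_1) \cdots \ell(v_n)$ with context $c_i = \lh_k(v_i)$; the second stream matches $H_k^\ell(t)$ by \eqref{labelentropy} plus the Ferragina--Venturini error term. For the second bound, the labels are encoded exactly as above, contributing $H_k^\ell(t) + o(\cdot)$ bits, and then the degree sequence is encoded with the enriched context $c_i = (\lh_k(v_i), \ell(v_i))$; summing $H_0$ over the groups indexed by $(z, a)$ gives $\sum_{z, a, i} n_{z,i,a}^t \log_2(n_{z,a}^t / n_{z,i,a}^t) = H_k^{\ell, \deg}(t)$. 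For the third bound the roles swap: the degree information is sent first via Theorem~\ref{theorem-degentropybound}, paying $H^{\deg}(t) + o(|t|)$ bits, and the label string is then sent with context $c_i = (\lh_k(v_i), \deg(v_i))$, realising $H_k^{\deg, \ell}(t)$ in the analogous way.

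The principal obstacle is that the degree alphabet is a priori unbounded (up to $|t| - 1$), which rules out applying the external-context coder to the degrees directly. The standard workaround, already exploited in the proof of Theorem~\ref{theorem-degentropybound}, is to represent the degrees via the DFUDS bit string of length $2|t|$; a block-based $0$-th-order coder on this binary string achieves $H^{\deg}(t) + o(|t|)$ bits, and essentially the same machinery extends to the conditional setting needed in bound two. The one remaining care point is that, when the degree itself enters the context in bound three, the set of realised contexts grows with $|t|$; however, only contexts that actually occur in $t$ contribute to the frequency tables, and with the Ferragina--Venturini block length chosen as $\frac{1}{2}\log_\sigma |t|$ together with the hypothesis $\sigma \leq |t|^{1-\alpha}$, the combined overhead of frequency tables, block dictionaries, and the auxiliary data of Theorem~\ref{theorem-degentropybound} is absorbed into the single $O\bigl((|t|k \log \sigma + |t| \log \log_\sigma |t|)/\log_\sigma |t|\bigr)$ error term stated in the theorem.
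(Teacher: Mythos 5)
The paper does not prove this statement: Theorem~\ref{theorem-ganczorzentropybounds} is quoted verbatim from \cite{Ganczorz20} (Theorem~12 there) and used as a black box, so there is no in-paper proof to compare your attempt against. Judged on its own terms, your architecture---shape and labels as two streams, each compressed by a block-based coder whose contexts the decoder can reconstruct during a preorder traversal, with the identity $\sum_{z,a,i} n^t_{z,i,a}\log_2(n^t_{z,a}/n^t_{z,i,a}) = H_k^{\ell,\deg}(t)$ doing the bookkeeping---is a reasonable reconstruction of how bounds of this shape are obtained, and it is consistent with the definitions in Section~3.

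As a proof, however, it has concrete gaps exactly where the work lies. First, the claim that ``a block-based $0$-th-order coder on this binary [DFUDS] string achieves $H^{\deg}(t)+o(|t|)$ bits'' is false as literally stated: the DFUDS bit string of a tree with $N$ nodes has length about $2N$ with essentially balanced parentheses, so its $0$-th order empirical entropy is about $2N$ bits regardless of the degree distribution (about $4n$ bits for a binary tree with $n$ leaves, versus $H^{\deg}(t)\approx 2n$, and versus $H^{\deg}(t)=O(\log N)$ for a path). What Theorem~\ref{theorem-degentropybound} actually uses is an enumerative code on blocks typed by their degree content, i.e.\ a code on the degree sequence, not $H_0$ of the bit string; your bounds two and three inherit this defect because you propose to ``extend the same machinery to the conditional setting'' without saying what that machinery is. Second, in bounds two and three the degree alphabet, and hence the context set, is not bounded by $\sigma^{O(k)}$, which violates the hypothesis of the coding lemma you invoke; you acknowledge this but resolve it only by assertion, and controlling the frequency-table overhead over $\sigma^k\cdot D$ contexts (with $D$ the number of distinct degrees) is precisely the delicate part. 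Third, in bound two neither stream is decodable on its own: the label contexts $\lh_k(v_i)$ require knowing the ancestor chain of $v_i$, i.e.\ the shape, which in your ordering is only carried by the degree stream; this is repairable by interleaved preorder decoding of the two streams, but as written the scheme is circular. So the proposal is a plausible outline of the argument in \cite{Ganczorz20} with the hard steps left open, not a complete proof.
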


\subsection{Label-shape entropy}
Another notion of empirical entropy for trees which incorporates both node labels and tree structure was recently introduced in \cite{HuckeLS19}: Let 
us start with a binary tree $t \in \bt(\Sigma)$. The $k^{th}$-order label-shape entropy $H_k(t)$ of $t$ (in \cite{HuckeLS19} it is simply called the $k^{th}$-order empirical entropy of $t$)
is defined as
\begin{align}\label{def-h_kbinary}
H_k(t) = \sum_{z \in \mathcal{L}_k}\sum_{\tilde{a} \in \Sigma \times \{0,2\}}m_{z,\tilde{a}}^t\log_2\left(\frac{m_z^t}{m_{z,\tilde{a}}^t}\right),
\end{align}
where $m_z^t$ and $m_{z,\tilde{a}}^t$ are from \eqref{m_z^t} and \eqref{m_z,tildea^t}, respectively.
Now let $t \in \ut(\Sigma)$ be an unranked tree and recall that $\ffcns(t) \in \bt(\Sigma)$. The $k^{th}$-order label-shape entropy $H_k(t)$ of $t$ is defined as
\begin{align}\label{def-h_kfcns}
H_k(t)=H_k(\ffcns(t)).
\end{align}
The following result is shown in \cite{HuckeLS19} using a grammar-based encoding of trees: 
\begin{theorem}\label{theorem-isitpaper}
Every tree $t \in \ut(\Sigma)$ can be represented within the following bound (in bits):
\begin{align*}
H_k(t) + \mathcal{O}\left(\frac{k|t|\log\sigma}{\log_{\sigma}|t|}\right)+\mathcal{O}\left(\frac{|t|\log\log_{\sigma}|t|}{\log_{\sigma}|t|}\right)+\sigma .
\end{align*}
\end{theorem}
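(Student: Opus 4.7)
The plan is to reduce to the binary-tree case and then combine grammar-based tree compression with context-sensitive arithmetic coding. Since $H_k(t)=H_k(\ffcns(t))$ by \eqref{def-h_kfcns} and a direct induction on $t$ shows $|\ffcns(t)|=2|t|+1$, it suffices to prove the bound for binary trees $b \in \bt(\Sigma)$ with $|t|$ replaced by $|b|$ throughout the asymptotic terms; rewriting back in terms of $|t|$ is absorbed into the constants.

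For such a $b$, I would combine two ingredients. First, a grammar-based compressor that produces a tree straight-line program $\mathcal{G}$ deriving $b$ with $|\mathcal{G}| = O(|b|/\log_\sigma |b|)$, which is available via known worst-case constructions (for instance, the approach behind \cite{GJL19}). Second, a context-dependent arithmetic coding of $\mathcal{G}$ in which every terminal node $v$ of $b$ is coded against the $k$-history $h_k(v)$ it receives after full unfolding. Concretely, while traversing $\mathcal{G}$ top-down from the axiom, whenever a production emits a node $v$ carrying a terminal label, I write the pair $\lambda(v)=(\ell(v),\deg(v))\in\Sigma\times\{0,2\}$ by arithmetic coding against the empirical distribution $m^b_{z,\lambda(v)}/m^b_z$, where $z=h_k(v)$. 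Since every node of $b$ is emitted exactly once during unfolding, the arithmetic coder spends, up to a constant additive overhead,
\[ \sum_{v\in V(b)} \log_2\left(\frac{m^b_{h_k(v)}}{m^b_{h_k(v),\lambda(v)}}\right) = H_k(b) \]
bits for the terminal content by \eqref{def-h_kbinary}.

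Alongside this entropy payload I need to transmit (a) a header listing the $\sigma$ terminal symbols in $O(\sigma)$ bits, (b) the $k$-history model consulted by the arithmetic coder, whose frequency tables can be transmitted in $O(|\mathcal{G}| \cdot k\log\sigma) = O(k|b|\log\sigma/\log_\sigma|b|)$ bits, and (c) the skeleton of $\mathcal{G}$ (the shape of each right-hand side together with its non-terminal pointers) in $O(|\mathcal{G}|\log|\mathcal{G}|) = O(|b|\log\log_\sigma|b|/\log_\sigma|b|)$ bits. Summing these three overheads with $H_k(b)$ produces exactly the bound claimed in the theorem.

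The main obstacle is the second ingredient. A non-terminal $A$ is typically expanded at many positions of $b$ under many different $k$-history contexts, so there is no single context against which the right-hand side $r_A$ can be coded in isolation. The resolution is to encode $A$ lazily: each occurrence of $A$ inside some right-hand side inherits the $k$-history that was current when its parent occurrence was expanded, and this inherited context is propagated downward until the next terminal is emitted. An inductive charging argument then shows that every node of $b$ is accounted for exactly once; the log-sum inequality (Lemma~\ref{logsum}) is what allows contributions of different occurrences of the same non-terminal, under different contexts, to be aggregated into the single sum displayed above. This combinatorial bookkeeping, rather than the grammar construction itself, is the technical heart of the argument in \cite{HuckeLS19}.
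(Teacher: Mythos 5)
First, note that the paper does not prove Theorem~\ref{theorem-isitpaper} at all: it is imported verbatim from \cite{HuckeLS19}, so there is no in-paper proof to compare against. Your outline does assemble the right raw ingredients of that cited proof (reduction to binary trees via $\ffcns$ with $|\ffcns(t)|=2|t|+1$, a tree straight-line program of size $\mathcal{O}(|b|/\log_\sigma|b|)$, and the log-sum inequality to aggregate contexts), but the coding scheme you build from them has genuine gaps in its accounting. The most serious is item (b): the frequency tables of the $k$-history model are a property of the \emph{tree} $b$, not of the grammar, so their transmission cost is not $\mathcal{O}(|\mathcal{G}|\cdot k\log\sigma)$. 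The number of distinct $k$-histories occurring in $b$ can be $\Theta(|b|)$, and storing the counts $m^b_{z,\tilde a}$ then costs $\Omega(|b|\log|b|)$ bits, which swamps both the allowed low-order terms and, typically, $H_k(b)$ itself. Item (c) is also off: $|\mathcal{G}|\log|\mathcal{G}| = \Theta\bigl(|b|\log|b|/\log_\sigma|b|\bigr) = \Theta(|b|\log\sigma)$, not $\mathcal{O}(|b|\log\log_\sigma|b|/\log_\sigma|b|)$, and $|b|\log\sigma$ is not within the claimed slack unless $k\geq\Omega(\log_\sigma|b|)$.

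There is also a structural problem: if, as you propose, every node of $b$ is individually arithmetic-coded during the full unfolding of $\mathcal{G}$, then the decoder reconstructs $b$ from the $\lambda$-values alone (the leaf/internal flags in preorder already determine the shape of a binary tree), so the grammar carries no information and is pure overhead; your scheme degenerates to direct semi-adaptive $k$-th order coding of $b$, which fails exactly because of the model cost above. The actual argument in \cite{HuckeLS19} goes the other way around: it encodes the \emph{grammar} itself, writing each right-hand side once with a zeroth-order entropy coder over the $\mathcal{O}(|b|/\log_\sigma|b|)$ grammar symbols, and then proves --- this is where the log-sum inequality does its work --- that the resulting zeroth-order code length for the grammar is at most $H_k(b)$ plus the stated low-order terms. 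No explicit $k$-th order model is ever transmitted. Your closing paragraph gestures at the right difficulty (aggregating the many contexts in which a non-terminal occurs), but the charging has to bound the grammar's own code length by $H_k$, not distribute $H_k$ over an unfolding.
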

Note that for binary trees, there are basically two possibilities how to compute the label-shape entropy $H_k(t)$: The first is to compute the label-shape entropy as defined in \eqref{def-h_kbinary}, the second is to consider the binary tree as an unranked tree and compute the label-shape entropy of its first-child next-sibling encoding as defined in \eqref{def-h_kfcns}. The following lemma, from \cite{Journalarxivversion} states that if we consider the first-child next-sibling encoding of the binary tree instead of the binary tree itself, the $k^{th}$-order label-shape entropy does not increase if we adapt the value of $k$ accordingly:

\begin{lemma}\label{Lemma-fcns}
Let $t \in \bt(\Sigma)$ denote a binary tree with first-child next-sibling encoding $\ffcns(t) \in \bt(\Sigma)$. Then $H_{2k}(\ffcns(t))\leq H_{k-1}(t)$ for $1 \leq k \leq n$.  
\end{lemma}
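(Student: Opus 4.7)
The plan is to map the $2k$-histories in $t' := \ffcns(t)$ to $(k-1)$-histories in $t$ and then invoke the log-sum inequality (Lemma~\ref{logsum}). The node set decomposes as $V(t') = V(t) \cup D$ with $V(t) \cap D = \emptyset$, where each $v \in V(t)$ retains its $t$-label and has degree $2$ in $t'$, and each $d \in D$ is a $\Box$-labeled leaf of $t'$.

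The first step is the following combinatorial observation: descending from a node $p$ to its left child in $t$ corresponds in $t'$ to a single left-edge, while descending from $p$ to its right child $v$ in $t$ corresponds in $t'$ to the left-edge from $p$ to $v$'s left sibling $u$ followed by the right-edge from $u$ to $v$. Iterating, every root-to-$v$ path of $k-1$ edges in $t$ expands to a path in $t'$ of between $k-1$ and $2(k-1) \leq 2k-2$ edges. Consequently the $2k$-history $h_{2k}^{t'}(v)$ determines $h_{k-1}^{t}(v)$ for every $v \in V(t)$, yielding a well-defined map $\phi : \mathcal{L}_{2k} \to \mathcal{L}_{k-1}$ that can be computed by parsing the direction sequence of $z$ from right to left, interpreting each maximal ``$01$'' as a right descent in $t$ and each isolated ``$0$'' as a left descent.

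The second step is a blockwise application of Lemma~\ref{logsum}. Writing $O_z = \sum_{a \in \Sigma} m_{z,(a,2)}^{t'}$ for the number of original nodes with $t'$-history $z$, the identities $m_{w,a}^t = \sum_{z \in \phi^{-1}(w)} m_{z,(a,2)}^{t'}$ and $m_w^t = \sum_{z \in \phi^{-1}(w)} O_z$ combined with Lemma~\ref{logsum} (taking $a_i = m_{z,(a,2)}^{t'}$ and $b_i = O_z$ over $z \in \phi^{-1}(w)$) yield, for each $(w,a) \in \mathcal{L}_{k-1} \times \Sigma$,
\[
\sum_{z \in \phi^{-1}(w)} m_{z,(a,2)}^{t'} \log_2 \frac{O_z}{m_{z,(a,2)}^{t'}} \;\leq\; m_{w,a}^t \log_2 \frac{m_w^t}{m_{w,a}^t},
\]
and summing over $(w,a)$ bounds by $H_{k-1}(t)$ the modified version of $H_{2k}(t')$ in which only original-node contributions are retained and $m_z^{t'}$ is replaced by $O_z$ inside the logs.

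The main obstacle is then absorbing the genuine dummy contributions: since $m_z^{t'} = O_z + D_z$ with $D_z = m_{z,(\Box,0)}^{t'}$, the true $H_{2k}(t')$ exceeds the modified version both by the dummy summands $D_z \log_2(m_z^{t'}/D_z)$ and by the replacement of $O_z$ by $m_z^{t'}$ inside the original-node logs. The key observation is that each dummy is a child of some original $v \in V(t)$ and its $t'$-history is a shift of $h_{2k}^{t'}(v)$; moreover, a left-dummy of $v$ signals $\deg^t(v) = 0$ while a right-dummy signals that $v$ has no right sibling in $t$. Both pieces of information are already encoded on the right-hand side, via the $(a,0)$-counts in $H_{k-1}(t)$ and via the $(k-1)$-history context of $v$'s would-be right sibling. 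To close the gap one has to perform the log-sum step over a finer equivalence on $\mathcal{L}_{2k}$ that groups each original-node term together with the dummy terms at its shifted histories, so that a single application of Lemma~\ref{logsum} produces exactly the summands $m_{w,(a,d)}^t \log_2 \frac{m_w^t}{m_{w,(a,d)}^t}$ for $d \in \{0,2\}$ on the right-hand side. The combinatorial bookkeeping required to pair each dummy precisely with the correct original-node term is where I expect the bulk of the technical effort to lie.
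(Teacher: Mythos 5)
You should note first that this paper does not actually prove Lemma~\ref{Lemma-fcns}: the proof is deferred to \cite{Journalarxivversion}, so your attempt can only be judged on its own terms. Your first two steps are sound as far as they go. For an original node $v\in V(t)$ the root-to-$v$ path in $\ffcns(t)$ passes only through original nodes and its direction string contains no factor $11$, so the right-to-left parse (read ``$01$'' as a right descent in $t$, an unmatched ``$0$'' as a left descent) does yield a well-defined map $\phi$ with $\phi(h_{2k}^{t'}(v))=h_{k-1}^t(v)$, and the fiber-wise application of Lemma~\ref{logsum} correctly bounds $\sum_{z}\sum_{a} m^{t'}_{z,(a,2)}\log_2\bigl(O_z/m^{t'}_{z,(a,2)}\bigr)$ by $\sum_{w,a} m^t_{w,a}\log_2\bigl(m^t_w/m^t_{w,a}\bigr)$.

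The gap is your third step, which is a description of an argument rather than an argument, and it is exactly where the content of the lemma sits. After your steps one still has to show that $\sum_z\bigl(O_z\log_2(m^{t'}_z/O_z)+D_z\log_2(m^{t'}_z/D_z)\bigr)$, the per-history uncertainty of ``original versus dummy'' summed over all $2|t|+1$ nodes of $\ffcns(t)$, is at most $\sum_{w,a}\sum_{d\in\{0,2\}}m^t_{w,(a,d)}\log_2\bigl(m^t_{w,a}/m^t_{w,(a,d)}\bigr)$, the residual degree uncertainty of the $|t|$ nodes of $t$. No regrouping followed by a single application of Lemma~\ref{logsum} can deliver this: a right-dummy below $v$ records that $v$ has no right sibling, which is positional information with no counterpart among the $(a,d)$-terms on the right-hand side, and a left-dummy of $v$ sits at a history shifted one level below $v$, hence not in the $\phi$-fiber of $v$'s own $(k-1)$-history. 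Concretely, take $t=a(a,a)$ and $k=1$ under the paper's padding convention (here necessarily $\Box=a$): the leftover quantity above equals $3\log_2 3$ while the available budget on the right equals $3\log_2 3-2$, so the inequality your plan requires is false for this instance; indeed one computes $H_2(\ffcns(t))=3\log_2 3$ versus $H_0(t)=3\log_2 3-2$, so the dummy contributions genuinely overshoot. Whatever the correct proof is, it cannot consist of pairing each dummy leaf with an original-node term as you propose; the dummy leaves must be handled by a different (global) mechanism, or under conventions/hypotheses that prevent the history collisions exploited above.
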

See \cite{Journalarxivversion} for a proof of Lemma~\ref{Lemma-fcns}. 
In contrast to Lemma~\ref{Lemma-fcns}, there are families of binary trees $t_n$ where $H_k(t_n) \in \Theta(n-k)$
and $H_k(\ffcns(t_n)) \in \Theta(\log(n-k))$ \cite{Journalarxivversion}.

\section{Comparison of the empirical entropy notions}

As we have seen in Theorems~\ref{theorem-ganczorzentropybounds} and~\ref{theorem-isitpaper}, entropy bounds for the number of bits needed to represent an unranked labeled tree $t$ are 
achievable by
\begin{itemize}
\item $H_k(t)$,
\item $H_k^{\ell}(t) + H_k^{\ell,\deg}(t)$,
\item $H^{\deg}(t)+H_k^{\deg,\ell}(t)$, and
\item $H^{\deg}(t)+H_k^{\ell}(t)$,
\end{itemize}
where in all cases we have to add a low-order term.
The term $H^{\deg}(t)+H_k^{\ell}(t)$ is lower-bounded by  $H_k^{\ell}(t) + H_k^{\ell,\deg}(t)$ and $H^{\deg}(t) + H_k^{\deg,\ell}(t)$ by Lemma~\ref{lemma-ganzorzentropien}. For the special case of unlabeled unranked trees, $H^{\deg}(t)$ (plus low-order terms) is an upper bound on the encoding length (see Theorem~\ref{theorem-degentropybound}). Thus, for the special case of unlabeled trees, we will also compare the entropy bounds to $H^{\deg}(t)$.

\subsection{Unlabeled binary trees}

In this subsection, we consider unlabeled binary trees, i.e., trees $t \in \mathcal{B}(\{a\})$ over the unary alphabet $\Sigma=\{a\}$. 
As $\Sigma = \{a\}$, the fixed dummy symbol used to pad $k$-histories and $k$-label-histories is $\Box = a$. We start with a simple lemma:

\begin{lemma}\label{lemma-degreebinary}
Let $t$ be an unlabeled binary tree with $n$ leaves ($|t|=2n-1$). Then $H^{\deg}(t) = H_k^{\ell,\deg}(t) = (2-o(1))n$.
\end{lemma}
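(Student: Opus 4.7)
The plan is to observe that both equalities are essentially computational: the first reduces to the definitions because the alphabet is unary, and the second is a direct asymptotic estimate of a two-term sum.

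First I would show $H^{\deg}(t) = H_k^{\ell,\deg}(t)$. Since $\Sigma = \{a\}$ and $\Box = a$, there is only one possible $k$-label-history, namely $z = a^k$, and every node has label $a$. Hence for every node $v$ we have $\lh_k(v) = z$ and $\ell(v) = a$, which gives $\tildem_{z,a}^t = |t|$ and $\tildem_{z,i,a}^t = \tildem_i^t$ for every $i$. Plugging these equalities into the definition of $H_k^{\ell,\deg}(t)$ reproduces exactly the formula defining $H^{\deg}(t)$, so the first equality is immediate.

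Next I would evaluate $H^{\deg}(t) = (2-o(1))n$. An unlabeled binary tree with $n$ leaves has exactly $n-1$ internal nodes (all of degree $2$) and $n$ nodes of degree $0$, so $\tildem_0^t = n$, $\tildem_2^t = n-1$ and $\tildem_i^t = 0$ for $i \notin \{0,2\}$, with $|t| = 2n-1$. Thus
\begin{align*}
H^{\deg}(t) = n \log_2\!\left(\frac{2n-1}{n}\right) + (n-1)\log_2\!\left(\frac{2n-1}{n-1}\right).
\end{align*}
Both ratios tend to $2$ as $n \to \infty$, so by a standard expansion $\log_2(2+\varepsilon_n) = 1 + O(\varepsilon_n)$ each summand equals $n - O(1)$, giving $H^{\deg}(t) = 2n - O(1) = (2-o(1))n$.

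There is no real obstacle here; the only thing to be slightly careful about is that the $o(1)$ constants in the two $\log_2$ terms go the opposite way (one ratio is $<2$, the other $>2$) so the additive corrections are genuinely $O(1)$ rather than growing, but this is routine. The key conceptual point, worth emphasising, is that the entire mechanism of the label-degree entropy collapses to plain degree entropy over a unary alphabet, which is the content of the first equality and is the reason the lemma will be useful in later comparisons.
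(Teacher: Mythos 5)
Your proposal is correct and follows essentially the same route as the paper: count the $n$ leaves and $n-1$ internal nodes to evaluate the two-term sum for $H^{\deg}(t)$ asymptotically, and observe that over a unary alphabet every node has the same label and $k$-label-history so that $H_k^{\ell,\deg}(t)$ collapses to $H^{\deg}(t)$. The paper packages the asymptotics as $(2n-1)g(n)$ with $g(x)\to 1$ from below rather than expanding the logarithms termwise, but this is only a cosmetic difference.
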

\begin{proof}
Every binary tree of size $2n-1$ consists of $n$ nodes of degree $0$ and $n-1$ nodes of degree $2$ (independently of the shape of the binary tree). Thus, we obtain:
\begin{align*}
H^{\deg}(t) &= \sum_{i=0}^{|t|}\tildem_{i}^t\log_2\left(\frac{|t|}{\tildem_{i}^t}\right)=  n\log_2\left(\frac{2n-1}{n}\right) + (n-1)\log_2\left(\frac{2n-1}{n-1}\right) \\
&= (2n-1)g(n) \geq  2n(1-o(1)),
\end{align*}
where $g : [2,\infty) \to \mathbb{R}$ is the mapping defined by
$$
g(x) = \frac{x}{2x-1} \log_2\left(\frac{2x-1}{x}\right) +\frac{x-1}{2x-1}\log_2\left(\frac{2x-1}{x-1}\right) .
$$ 
It converges to $1$ from below for $x \to \infty$.
Moreover, as $t$ is unlabeled, every node has the same label and the same label-history. Thus, $H_k^{\ell, \deg}(t) = H^{\deg}(t)$. \qed
\end{proof}
On the other hand, for the label-shape entropy we have:

\begin{lemma}\label{lemma-vergleich3}
There exists a family of unlabeled binary trees $(t_n)_{n \geq 1}$ such that $|t_n|=2n-1$ and $H_k(t_n) \leq \log_2(en)$
for all $n \geq 1$ and $1 \leq k \leq n$.
\end{lemma}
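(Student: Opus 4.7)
The plan is to take $t_n$ to be the \emph{left caterpillar}: internal nodes $u_0,u_1,\ldots,u_{n-2}$ form the left spine, each $u_i$ with $0\le i\le n-3$ has $u_{i+1}$ as its left child and a leaf as its right child, while $u_{n-2}$ has two leaf children. This yields $|t_n|=2n-1$ with $n-1$ internal nodes and $n$ leaves. Since $\Sigma=\{a\}$, the padding symbol from Section~\ref{sec-binarytrees} is $\Box=a$.

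The first step is to compute the histories. Each spine node $u_i$ has $h(u_i)=(a0)^i$; each right-child leaf whose parent is $u_i$ has history $(a0)^i a1$; and the leftmost deepest leaf (the left child of $u_{n-2}$) has history $(a0)^{n-1}$. After prepending $(\Box 0)^k=(a0)^k$ and keeping the length-$2k$ suffix, only two distinct $k$-histories arise for every $1\le k\le n$: the history $z_1=(a0)^k$ collects all of $u_0,\ldots,u_{n-2}$ together with the leftmost deepest leaf, giving $m^{t_n}_{z_1}=n$ nodes, of which $n-1$ are internal ($\lambda=(a,2)$) and exactly $1$ is a leaf; the history $z_2=(a0)^{k-1}a1$ collects the remaining $n-1$ right-child leaves, all with $\lambda=(a,0)$.

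Plugging these counts into \eqref{def-h_kbinary}, the contribution of $z_2$ vanishes because all its nodes share the same $\lambda$, and only the $z_1$-group contributes:
\[
H_k(t_n)\;=\;(n-1)\log_2\frac{n}{n-1}\;+\;\log_2 n.
\]
A final application of the elementary inequality $(n-1)\log_2\!\bigl(1+\tfrac{1}{n-1}\bigr)\le \tfrac{1}{\ln 2}=\log_2 e$ then yields $H_k(t_n)\le \log_2 n+\log_2 e=\log_2(en)$, uniformly in $1\le k\le n$.

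The only real obstacle is verifying that the two-group collapse persists for every admissible $k$, which I expect to be the main step to execute carefully. This reduces to two observations: any length-$2k$ suffix of the padded spine history $(a0)^{k+i}$ equals $(a0)^k$; and descending into a right child produces exactly one $a1$ token, which necessarily occupies the last two positions of the length-$2k$ window, yielding $(a0)^{k-1}a1$ regardless of $i$. The degenerate small cases $n=1,2$ (where the spine essentially disappears) can be checked by direct inspection and trivially satisfy the bound.
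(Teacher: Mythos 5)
Your proposal is correct and follows essentially the same route as the paper: the same left-degenerate (caterpillar) tree, the same identification of the two $k$-histories $(a0)^k$ and $(a0)^{k-1}a1$ with counts $n$ (of which $n-1$ internal) and $n-1$ (all leaves), and the same final bound via $(n-1)\log_2\bigl(\tfrac{n}{n-1}\bigr)\le\log_2 e$.
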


\begin{proof}
We define $t_1 = a$ and $t_n = a(t_{n-1},a)$ for $n \geq 2$. 
Hence, $t_n$ is a left-degenerate binary tree with $n$ leaves such that every node is labeled with the symbol $a$.
Figure~\ref{fig-t6} shows $t_6$.
Fix an integer $k \geq 1$ and let $\Box = a$ be the fixed dummy 
symbol in $\Sigma$ used for padding histories shorter than $k$. 
For $n=1$, we have $H_k(t_1)=0$. Assume now that $n > 1$.
We start with computing the $k^{th}$-order label-shape entropy 
$H_k(t_n)$.  Only two $k$-histories appear in $t_n$:
\begin{itemize}
\item $z_0 = (a0)^k$: there are $n$ nodes with this history, namely one node with $\lambda$-value  
$(a,0)$ (the left most leaf) and $n-1$ nodes $v$ with $\lambda(v) = (a,2)$ (the $n-1$ internal nodes). 
\item $z_1 = (a0)^{k-1}a1$:  there are $n-1$ nodes with this history and all of them have the $\lambda$-value $(a,0)$.
\end{itemize}
Altogether, the $k^{th}$-order label-shape entropy of $t_n$ is
$$
H_k(t_n) = \log_2n + (n-1)\log_2\left(\frac{n}{n-1}\right) \leq \log_2n+\log_2e,
$$
where the last inequality follows from $(\frac{n}{n-1})^{n-1} \leq e$.
\qed
\end{proof}
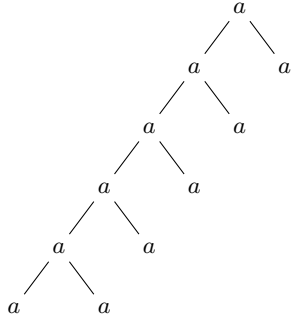
\begin{figure}[t]
\begin{center}
		\tikzset{level 1/.style={sibling distance=12mm}}
		\tikzset{level 2/.style={sibling distance=12mm}}
		\tikzset{level 3/.style={sibling distance=12mm}}  
		\tikzset{level 4/.style={sibling distance=12mm}}
		\begin{tikzpicture}[scale=1,auto,swap,level distance=8mm]
		\node (eps) {$a$} 
		child {node {$a$}
		  child {node{$a$} 
		  child {node {$a$}
		  child {node {$a$}
		 child {node {$a$}}
		child {node {$a$}}}
		child {node {$a$}}}
		child {node {$a$}}}
		child {node {$a$}}}
		child{node{$a$}};
		\end{tikzpicture}
\end{center}
	\caption{The binary tree $t_6$ from Lemma~\ref{lemma-vergleich3}.}	
	\label{fig-t6}
\end{figure}
Lemmas~\ref{lemma-degreebinary} and~\ref{lemma-vergleich3} already indicate that all entropies considered in this paper except for the label-shape entropy
are not interesting for unlabeled binary trees. For every unlabeled binary tree $t$ with $n$ leaves (and $2n-1$ nodes) we have:
\begin{itemize}
\item $H_k^{\ell}(t) = H_k^{\deg,\ell}= 0$, as every node  of $t$ has the same label.
\item $H_k^{\ell}(t)+H_k^{\ell,\deg}(t)=H^{\deg}(t)+H_k^{\deg,\ell}(t)=H_k^{\ell}(t) + H^{\deg}(t)=H^{\deg}(t)$ and these values are lower bounded
by $2n(1-o(1))$ (Lemma~\ref{lemma-degreebinary}).
\end{itemize}
The only notion of empirical tree entropy that is able to capture regularities in unlabeled binary trees (and that attains different values for different binary trees of the same size) is the label-shape entropy \eqref{def-h_kbinary} from \cite{HuckeLS19}.

\subsection{Labeled binary trees}

Next, we consider binary trees $t \in \bt(\Sigma)$, where $\Sigma$ is arbitrary. By Lemmas~\ref{lemma-degreebinary} and~\ref{lemma-vergleich3}, we already know that there are 
families $(t_n)_{n \geq 1}$ of binary trees, for which $t_n$ has $n$ leaves and $H_k(t_n)$ is exponentially smaller than $H^{\deg}(t_n)+H_k^{\deg,\ell}(t_n)$ and $H_k^{\ell}(t_n)+H_k^{\ell,\deg}(t_n)$ (and thus, $H_k^{\ell}(t_n)+H^{\deg}(t)$). 
As in the special case of unlabeled binary trees, we find that $H^{\deg}(t) = 2n(1-o(1))$ for every binary tree $t$ of size $2n-1$ (the node labels do not influence
$H^{\deg}(t)$), which implies $H^{\deg}(t) + H_k^{\deg,\ell}(t) \geq 2n(1-o(1))$.
The following lemma shows that $H_k(t)$ is always smaller than $H_k^{\ell}(t) + H_k^{\ell,\deg}(t)$ and $H^{\deg}(t) + H_k^{\deg,\ell}(t)$ (and  hence also 
$H_k^{\ell}(t)+H^{\deg}(t)$) for $t \in \bt(\Sigma)$:

\begin{lemma}
Let $t \in \bt(\Sigma)$ be a binary tree. Then 
\begin{enumerate}[(i)]
\item $H_k(t) \leq H_k^{\ell}(t) + H_k^{\ell,\deg}(t)$ and
\item $H_k(t) \leq H^{\deg}(t) + H_k^{\deg,\ell}(t)$.
\end{enumerate}
\end{lemma}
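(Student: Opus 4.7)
The plan is to exploit the obvious refinement relationship between the $k$-history of a node in a binary tree and its $k$-label-history. I would define the projection $\pi : \mathcal{L}_k \to \Sigma^k$ that erases all direction bits and keeps only the label components. A short case analysis on the depth of $v$ shows that $\pi(h_k(v)) = \lh_k(v)$ for every $v \in V(t)$; the only subtle point is that the history padding $(\Box 0)^k$ projects to $\Box^k$, which matches the label-history padding. Consequently, for every $z' \in \Sigma^k$, $a \in \Sigma$ and $i \in \{0,2\}$, I obtain the fiber identities
\begin{align*}
n_{z'}^t = \sum_{z \in \pi^{-1}(z')} m_z^t \qquad\text{and}\qquad n_{z',i,a}^t = \sum_{z \in \pi^{-1}(z')} m_{z,(a,i)}^t ,
\end{align*}
and in particular $n_{z',a}^t = \sum_{z \in \pi^{-1}(z')} \sum_{i \in \{0,2\}} m_{z,(a,i)}^t$ (recall that in a binary tree $\deg(v) \in \{0,2\}$).

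The core step is a single application of the log-sum inequality (Lemma~\ref{logsum}) on each fiber. Fixing $z' \in \Sigma^k$ and $\tilde a = (a,i) \in \Sigma \times \{0,2\}$ and applying Lemma~\ref{logsum} to $(m_{z,\tilde a}^t)_{z \in \pi^{-1}(z')}$ and $(m_z^t)_{z \in \pi^{-1}(z')}$, then flipping signs, yields
\begin{align*}
\sum_{z \in \pi^{-1}(z')} m_{z,\tilde a}^t \log_2\left(\frac{m_z^t}{m_{z,\tilde a}^t}\right) \;\leq\; n_{z',i,a}^t \log_2\left(\frac{n_{z'}^t}{n_{z',i,a}^t}\right).
\end{align*}
Summing over $z'$ and $\tilde a$ bounds $H_k(t)$ from above by $S := \sum_{z',i,a} n_{z',i,a}^t \log_2(n_{z'}^t / n_{z',i,a}^t)$. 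Both (i) and (ii) then follow by splitting this logarithm.

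For (i), I would use $\log_2(n_{z'}^t/n_{z',i,a}^t) = \log_2(n_{z'}^t/n_{z',a}^t) + \log_2(n_{z',a}^t/n_{z',i,a}^t)$; since $n_{z',a}^t = \sum_{i \in \{0,2\}} n_{z',i,a}^t$, the two pieces of $S$ sum to exactly $H_k^{\ell}(t) + H_k^{\ell,\deg}(t)$. For (ii), I would use the alternative splitting $\log_2(n_{z'}^t/n_{z',i,a}^t) = \log_2(n_{z'}^t/n_{z',i}^t) + \log_2(n_{z',i}^t/n_{z',i,a}^t)$; the second piece collapses directly to $H_k^{\deg,\ell}(t)$, and for the first piece I would apply the log-sum inequality a second time, this time over $z' \in \Sigma^k$ with $i$ fixed, using $|t| = \sum_{z'} n_{z'}^t$ and $n_i^t = \sum_{z'} n_{z',i}^t$, to obtain $\sum_{z'} n_{z',i}^t \log_2(n_{z'}^t/n_{z',i}^t) \leq n_i^t \log_2(|t|/n_i^t)$; summing over $i$ yields $H^{\deg}(t)$.

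I do not expect any real obstacle: the only genuinely delicate point is verifying the fiber identity $\pi(h_k(v)) = \lh_k(v)$ for nodes of depth less than $k$, which is why I would make the padding compatibility explicit at the start. Everything else is two applications of the log-sum inequality together with an additive splitting of logarithms.
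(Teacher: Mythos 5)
Your proof is correct and follows essentially the same route as the paper's: both rest on the observation that $k$-histories refine $k$-label-histories and on the log-sum inequality combined with an additive splitting of the logarithm. The only difference is the order of operations — you coarsen from $\mathcal{L}_k$ to $\Sigma^k$ first and then split the logarithm at the coarse level (with one extra log-sum application for the degree term in (ii)), whereas the paper splits at the fine level and then applies the log-sum inequality to each of the two resulting sums — which changes nothing of substance.
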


\begin{proof}
We start with proving statement (i): We have
\begin{align*}
& H_k(t) = \sum_{z \in \mathcal{L}_k}\sum_{a \in \Sigma}\sum_{i \in \{0,2\}}m_{z,(a,i)}^t\log_2\left(\frac{m_z^t}{m_{z,(a,i)}^t}\right)\\
&= \sum_{z \in \mathcal{L}_k}\sum_{a \in \Sigma}\sum_{i \in \{0,2\}}m_{z,(a,i)}^t\left(\log_2\left(\frac{m_z^t}{m_{z,(a,0)}^t+m_{z,(a,2)}^t}\right) + \log_2\left(\frac{m_{z,(a,0)}^t+m_{z,(a,2)}^t}{m_{z,(a,i)}^t}\right)\right)\\
& =  \sum_{z \in \mathcal{L}_k}\sum_{a \in \Sigma}\left(m_{z,(a,0)}^t+m_{z,(a,2)}^t\right)
\log_2\left(\frac{m_z^t}{m_{z,(a,0)}^t+m_{z,(a,2)}^t}\right)  \\
&+
\sum_{z \in \mathcal{L}_k}\sum_{a \in \Sigma}\sum_{i \in \{0,2\}}m_{z,(a,i)}^t  \log_2\left(\frac{m_{z,(a,0)}^t+m_{z,(a,2)}^t}{m_{z,(a,i)}^t}\right)\\
&\leq  \sum_{z \in \Sigma^k}\sum_{a \in \Sigma}\tildem_{z,a}^t
 \log_2\left(\frac{\tildem_z^t}{\tildem_{z,a}^t}\right)
+
\sum_{z \in \Sigma^k}\sum_{a \in \Sigma}\sum_{i \in \{0,2\}}\tildem_{z,i,a}^t  \log_2\left(\frac{\tildem_{z,a}^t}{\tildem_{z,i,a}^t}\right)\\
&= H_k^{\ell}(t) + H_k^{\ell, \deg}(t),
\end{align*}
where the inequality  in the second last line follows from the log-sum inequality (Lemma~\ref{logsum})
and the last equality follows from the fact that in a binary tree, every node is either of degree $0$ or $2$. Statement (ii) can be shown in a similar way:
\begin{align*}
&H_k(t) = \sum_{z \in \mathcal{L}_k}\sum_{a \in \Sigma}\sum_{i \in \{0,2\}}m_{z,(a,i)}^t\log_2\left(\frac{m_z^t}{m_{z,(a,i)}^t}\right)\\
&=\sum_{z \in \mathcal{L}_k}\sum_{a \in \Sigma}\sum_{i \in \{0,2\}}m_{z,(a,i)}^t\left(\log_2\left(\frac{m_z^t}{\sum_{a \in \Sigma}m_{z,(a,i)}^t}\right) + \log_2\left(\frac{\sum_{a \in \Sigma}m_{z,(a,i)}^t}{m_{z,(a,i)}^t}\right)\right)\\
& =\sum_{z \in \mathcal{L}_k}\sum_{i \in \{0,2\}}\left(\sum_{a \in \Sigma}m_{z,(a,i)}^t\right)
\log_2\left(\frac{m_z^t}{\sum_{a \in \Sigma}m_{z,(a,i)}^t}\right)\\
&+
\sum_{z \in \mathcal{L}_k}\sum_{a \in \Sigma}\sum_{i \in \{0,2\}}m_{z,(a,i)}^t  \log_2\left(\frac{\sum_{a \in \Sigma}m_{z,(a,i)}^t}{m_{z,(a,i)}^t}\right)\\
&\leq \sum_{i \in \{0,2\}}n_i^t
 \log_2\left(\frac{|t|}{n_i^t}\right)
+
\sum_{z \in \Sigma^k}\sum_{a \in \Sigma}\sum_{i \in \{0,2\}}\tildem_{z,i,a}^t  \log_2\left(\frac{\tildem_{z,i}^t}{\tildem_{z,i,a}^t}\right)\\
&= H^{\deg}(t) + H_k^{\deg,\ell}(t),
\end{align*}
where the inequality  follows again from the log-sum inequality.
\qed\end{proof}

\subsection{Unlabeled unranked trees}

In this subsection, we consider unranked trees $t \in \ut(\Sigma)$ over the unary alphabet $\Sigma=\{a\}$. As $\Sigma = \{a\}$, the fixed dummy symbol used to pad $k$-histories and $k$-label-histories is $\Box = a$. Moreover, note that in order to compute $H_k(t)$ for an unranked tree $t \in \ut(\Sigma)$, we have to consider $\ffcns(t)$. Note that $\ffcns(t)$ is then an unlabeled binary tree: we must take $\Box = a$ by our conventions for the dummy symbol; hence the fresh $\Box$-labeled leaves in $\ffcns(t)$ are labeled
with $a$, too. 

As in the case of unlabeled binary trees, we observe that some entropy measures, in particular those that involve labels, only attain trivial values for unranked unlabeled trees.
More precisely, for every tree
$t \in \ut(\{a\})$ we have 
\begin{itemize}
\item  $H_k^{\ell}(t)=H_k^{\deg,\ell}(t)=0$, as every node has the same label $a$, and
\item $H^{\deg}(t) = H_k^{\ell,\deg}(t)$, as every node has the same $k$-label-history and the same label.
\item  We get $H_k^{\ell}(t) + H_k^{\ell,\deg}(t) = H^{\deg}(t) + H_k^{\deg,\ell}(t) = H^{\deg}(t)+ H_k^{\ell}(t) = H^{\deg}(t)$.
\end{itemize}
By this observation, we only compare $H_k(t)$ with $H^{\deg}(t)$ for $t \in \ut(\{a\})$ in this subsection. By Lemmas~\ref{lemma-degreebinary} and~\ref{lemma-vergleich3}, 
there exists a family of unlabeled trees $(t_n)_{n \geq 1}$ such that $|t_n|=\Theta(n)$ and for which $H_k(t_n)$ is exponentially smaller than $H^{\deg}(t_n)$.
For general unranked unlabeled trees, we find the following:

\begin{theorem}\label{theo-degreeentropy}
For every unlabeled unranked tree $t$ with $|t |\geq 2$ and integer $k \geq 1$, we have
$H_k(t) \leq 2H^{\deg}(t) + 2\log_2(|t|)+4$.
\end{theorem}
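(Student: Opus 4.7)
The plan is to reduce to the case $k = 1$ and then compute $H_1(\ffcns(t))$ directly. First, I would observe that $H_k(\ffcns(t))$ is non-increasing in $k$: the $k$-history of any node is the length-$2k$ suffix of its $(k+1)$-history (this holds uniformly in the node's depth because the padding scheme is consistent), so the partition of $V(\ffcns(t))$ by $(k+1)$-histories refines the partition by $k$-histories. Applying the log-sum inequality (Lemma~\ref{logsum}) to each history class then yields $H_{k+1}(\ffcns(t)) \leq H_k(\ffcns(t))$. Hence it suffices to prove the bound for $k = 1$.

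Write $n = |t|$, set $T := \ffcns(t)$, and let $l$, $m = n-l$ denote the numbers of leaves and internal nodes of $t$ (the hypothesis $n \geq 2$ yields $l, m \geq 1$). Since $\Sigma = \{a\}$ and the padding direction is $0$, the only $1$-histories appearing in $T$ are $z_0 = a0$ and $z_1 = a1$. The fcns construction produces $n$ real internal nodes of $T$ and $n+1$ added $\Box$-leaves: the $l$ left $\Box$-leaves correspond to the leaves of $t$, and the $m+1$ right $\Box$-leaves correspond to the real nodes with no right sibling (the last children together with the root). Splitting by $1$-history gives $m_{z_0}^T = n+1$, $m_{z_0,(a,0)}^T = l$, $m_{z_0,(a,2)}^T = m+1$, and $m_{z_1}^T = n$, $m_{z_1,(a,0)}^T = m+1$, $m_{z_1,(a,2)}^T = l-1$. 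Plugging into \eqref{def-h_kbinary} yields
\[ H_1(T) = l\log\tfrac{n+1}{l} + (m+1)\log\tfrac{n+1}{m+1} + (m+1)\log\tfrac{n}{m+1} + (l-1)\log\tfrac{n}{l-1},\]
where the convention $0\log(x/0) = 0$ handles the degenerate case $l = 1$.

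I would then split off leading pieces $l\log(n/l)$ and $m\log(n/m)$ from each summand, using identities such as $\log\tfrac{n+1}{l} = \log\tfrac{n}{l} + \log(1+\tfrac{1}{n})$ and $\log\tfrac{n}{l-1} = \log\tfrac{n}{l} + \log\tfrac{l}{l-1}$. Each remaining correction can be bounded via $\log_2(1+x) \leq x/\ln 2$, combined with the sharper inequality $\ln(1+\tfrac{1}{m}) \geq \tfrac{1}{m+1}$, which cancels the positive $(m+1)\log(\tfrac{n+1}{n})$-correction against the negative $2(m+1)\log(\tfrac{m}{m+1})$-correction. After the cancellations one reads off $H_1(T) \leq 2l\log(n/l) + 2m\log(n/m) + 2\log_2 n + 4$. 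The final step is the refinement-of-partitions inequality $l\log(n/l) + m\log(n/m) \leq H^{\deg}(t)$, equivalently $\sum_{i \geq 1} n_i \log n_i \leq m \log m$, which is simply the nonnegativity of the Shannon entropy of the normalised degree distribution $(n_i/m)_{i\geq 1}$. Combining the two estimates yields the claim.

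The main obstacle is the tight bookkeeping needed in the third paragraph to land the additive constant at $4$: a naive application of $\log_2(1+x) \leq x/\ln 2$ to every correction term yields a strictly larger constant, so one has to explicitly exploit the partial cancellation between the positive corrections $l\log(\tfrac{n+1}{n}) + (m+1)\log(\tfrac{n+1}{n})$ and the negative correction $2(m+1)\log(\tfrac{m}{m+1})$, together with a careful treatment of the extreme case $l = 1$ where the fourth summand of $H_1(T)$ vanishes.
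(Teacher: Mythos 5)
Your proof is correct and follows essentially the same route as the paper's: your counts $m^{T}_{z_0,(a,0)}=l$, $m^{T}_{z_0,(a,2)}=m+1$, $m^{T}_{z_1,(a,0)}=m+1$, $m^{T}_{z_1,(a,2)}=l-1$ are exactly the paper's facts (i)--(iv) about left/right children of $\ffcns(t)$, and your monotonicity reduction $H_k \leq H_1$ via the log-sum inequality is the same collapse of $k$-history classes that the paper performs by grouping $\mathcal{L}_k$ into $\mathcal{L}_k^0$ and $\mathcal{L}_k^1$. The remaining differences are organizational (you compute $H_1(\ffcns(t))$ exactly and compare it to $l\log_2(n/l)+m\log_2(n/m)\leq H^{\deg}(t)$, whereas the paper lower-bounds $H^{\deg}(t)$ twice), and your $\pm 1$ bookkeeping does land the additive constant at $4$.
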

\begin{proof}
We start the proof with some simple counting facts for fcns-encodings.
Consider an unranked tree $t \in \ut(\{a\})$ with $|t |\geq 2$.
We claim that
\begin{enumerate}[(i)]
\item the number of inner nodes of $\ffcns(t)$ which are left children equals the number of nodes of $t$ of degree at least $1$, and
\item the number of leaves of $\ffcns(t)$ which are left children equals the number of nodes of $t$ which are leaves.
\end{enumerate} 
To show this, one should think of $\ffcns(t)$ as the tree obtained by taking all nodes of $t$ (and adding some fresh nodes as leaves). For a node $v \in V(t)$
its left (right) child in $\ffcns(t)$ is the first child (right sibling) of $v$ in $t$ if it exists. If it does not exist, we take 
a fresh leaf as the left (right) child of $v$ in $\ffcns(t)$. Then, the inner nodes of $\ffcns(t)$ are exactly the nodes of $t$.
The inner nodes of $\ffcns(t)$ are moreover in bijective correspondence with the nodes of $\ffcns(t)$ that are left children; the 
corresponding bijection is of course the function $\parent(\cdot)$ that maps a left child to its parent node.
Hence, $\parent(\cdot)$ can be viewed as a bijection from the left children in $\ffcns(t)$ to the nodes of $t$.
Consider a left child $v$ in $\ffcns(t)$ and let $v' = \parent(v)$ be the corresponding node in $t$. 
If $v$ is an inner node of $\ffcns(t)$ then $v'$ has a first child in $t$, i.e.,
its degree is at least one. On the other hand, if $v$ is a leaf of $\ffcns(t)$ then $v'$ has no first child in $t$, i.e.,
its degree is zero. This yields the above statements (i) and (ii).

Let us now fix $k \geq 1$ and let  
\begin{eqnarray*}
\mathcal{L}_k^0 &=& \{a i_1\cdots a i_{k-1} a 0 \mid i_1, \ldots, i_{k-1} \in \{0,1\} \} \subseteq  \mathcal{L}_k \text{ and} \\
\mathcal{L}_k^{1} &=& \{a i_1\cdots a i_{k-1} a 1 \mid i_1, \ldots, i_{k-1} \in \{0,1\} \} \subseteq  \mathcal{L}_k .  
\end{eqnarray*}
Let $n_{\geq 1}^t$ denote the number of nodes of $t$ of degree at least $1$ and for $z \in \mathcal{L}_k$ and $i \in \{0,2\}$ 
let $m_{z,i}^{\ffcns(t)}$ denote the number of nodes in $\ffcns(t)$ having $k$-history $z$
and degree $i$. From (i) and (ii) we get
\begin{align}\label{eq-number of nodes}
n_0^t = \sum_{z \in \mathcal{L}_k^0}m_{z,0}^{\ffcns(t)} \quad\text{ and }\quad n_{\geq 1}^t+1 = \sum_{z \in \mathcal{L}_k^0}m_{z,2}^{\ffcns(t)}.
\end{align}
The $+1$ in the second identity comes from the fact that on the right-hand side we also count the root node (which is not a left child of $\ffcns(t)$).
Thus, we have
\begin{eqnarray}
H^{\deg}(t) &=& \sum_{i=0}^{|t|} n_i^t \log_2 \left(\frac{|t|}{n_i^t}\right) \nonumber \\
&\geq &n_{\geq 1}^t\log_2\left(\frac{|t|}{n_{\geq 1}^t}\right) + n_{0}^t\log_2\left(\frac{|t|}{n_{0}^t}\right)  \nonumber  \\
&\geq &(n_{\geq 1}^t+1)\log_2\left(\frac{|t|+1}{n_{\geq 1}^t+1}\right) -\log_2 |t|  + \label{term1} \\
&& n_{0}^t\log_2\left(\frac{|t|+1}{n_{0}^t}\right) -\frac{n_0^t}{\ln(2)|t|},  \label{term2}
\end{eqnarray}
where for the last inequality, we used $y/x \geq (y+1)/(x+1)$ if $y \geq x$ to get the term \eqref{term1} and 
$$\log_2(|t|+1) - \log_2 |t| \leq 
\frac{|t|+1-|t|}{\ln (2) |t|}  = \frac{1}{\ln(2) |t|}
$$
to get the term \eqref{term2}. The inequality in the last line follows from the mean 
value theorem. Hence, by the above equations \eqref{eq-number of nodes} and the fact that $|t|+1$ equals the number of nodes $v$ of $\ffcns(t)$ with $k$-history $h_k(v) \in \mathcal{L}_k^0$, we get
\begin{eqnarray*}
H^{\deg}(t)
&\geq & \left(\sum_{z \in \mathcal{L}_k^0}m_{z,2} ^{\ffcns(t)}\right)\log_2\left(\frac{\sum_{z \in \mathcal{L}_k^0}m_z^{\ffcns(t)}}{\sum_{z \in \mathcal{L}_k^0}m_{z,2} ^{\ffcns(t)}}\right)+ \\
& & \left(\sum_{z \in \mathcal{L}_k^0}m_{z,0}^{\ffcns(t)}\right)\log_2\left(\frac{\sum_{z \in \mathcal{L}_k^0}m_z^{\ffcns(t)}}{\sum_{z \in \mathcal{L}_k^0}m_{z,0}^{\ffcns(t)}}\right)
 -\log_2|t|-2\\
& \geq & \sum_{z \in \mathcal{L}_k^0} \sum_{i\in\{0,2\}} m_{z,i}^{\ffcns(t)} \log_2\left(\frac{m_z^{\ffcns(t)}}{m_{z,i}^{\ffcns(t)}}\right)
 -\log_2|t|-2,
\end{eqnarray*}
where the last inequality follows from the log-sum inequality (Lemma~\ref{logsum}).

In the next part of the proof, we establish a similar estimate by considering nodes of $\ffcns(t)$ with $h_k(v) \in \mathcal{L}_k^{1}$. These nodes
are exactly the right children in $\ffcns(t)$ and there are $|t|$ many such nodes. The parent-mapping yields a bijection from the right children in $\ffcns(t)$ to the nodes of $t$. Consider a node $v$ in $\ffcns(t)$ and assume that $v$ is the right child of $v' = \parent(v)$.
If $v$ is a leaf of $\ffcns(t)$ then $v'$ does not have a right sibling in $t$ and
if $v$ is an inner node of $\ffcns(t)$ then $v'$ has a right sibling in $t$.
Hence, the number of leaves $v$ of $\ffcns(t)$ with $h_k(v) \in \mathcal{L}_k^1$
is equal to the number of nodes in $t$ that do not have a right sibling. 
There are exactly $n_{\geq 1}^t+1$ such nodes (there are $n_{\geq 1}^t$ nodes that are the right-most
child of their parent node; in addition the root has no right sibling too).
Hence, we get:
\begin{itemize}
\item[(iii)] The number of leaves $v$ of $\ffcns(t)$ with $h_k(v) \in \mathcal{L}_k^1$ equals one plus the number of nodes of $t$ of degree at least $1$:
\begin{align*}
\sum_{z \in \mathcal{L}_k^1}m_{z,0}^{\ffcns(t)} = n_{\geq 1}^t+1.
\end{align*}
\item[(iv)] For the number of leaves of $|t|$, we thus obtain:
\begin{align*}
n_0^t &= |t|-\sum_{z \in \mathcal{L}_k^1}m_{z,0}^{\ffcns(t)} +1= \sum_{z \in \mathcal{L}_k^1}m_{z}^{\ffcns(t)}-\sum_{z \in \mathcal{L}_k^1}m_{z,0}^{\ffcns(t)}+1\\
&=\sum_{z \in \mathcal{L}_k^1}m_{z,2} ^{\ffcns(t)}+1.
\end{align*}
\end{itemize}
We thus find
\begin{eqnarray*}
H^{\deg}(t) &=& \sum_{i=0}^n n_i^t \log_2\left(\frac{|t|}{n_i^t}\right)\\
&\geq & n_{\geq 1}^t\log_2\left(\frac{|t|}{n_{\geq 1}^t}\right)+n_{0}^t\log_2\left(\frac{|t|}{n_{0}^t}\right)\\
&\geq & (n_{\geq 1}^t+1)\log_2\left(\frac{|t|}{n_{\geq 1}^t+1}\right) + \\
& & (n_0^t-1)\log_2\left(\frac{|t|}{n_0^t-1}\right)-\log_2(|t|)-2,
\end{eqnarray*}
where the last estimate follows from the fact that the mapping $x \mapsto H(x) - H(x+1)$ 
($x \in [0,|t|-1]$) 
with 
$$
H(x) = x\log_2\left(\frac{|t|}{x}\right)+(|t|-x)\log_2\left(\frac{|t|}{|t|-x}\right)
$$
the binary entropy function is minimal for $x=0$
and $H(0) - H(1) = - \log_2(|t|) - (|t|-1) \log_2\left(\frac{|t|}{|t|-1}\right) \geq - \log_2(|t|) - \log_2(e)$.
By the above equations in (iii) and (iv), we thus get
\begin{eqnarray*}
H^{\deg}(t)&\geq& \left( \sum_{z \in \mathcal{L}_k^1}m_{z,0}^{\ffcns(t)}\right) \log_2\left(\frac{\sum_{z \in \mathcal{L}_k^1}m_{z}^{\ffcns(t)} }{\sum_{z \in \mathcal{L}_k^1}m_{z,0}^{\ffcns(t)}}\right) \\
&+&\left(\sum_{z \in \mathcal{L}_k^1}m_{z,2} ^{\ffcns(t)}\right) \log_2\left(\frac{\sum_{z \in \mathcal{L}_k^1}m_{z}^{\ffcns(t)} }{\sum_{z \in \mathcal{L}_k^1}m_{z,2} ^{\ffcns(t)}}\right)-\log_2(|t|)-2\\
&\geq & \sum_{z \in \mathcal{L}_k^1} \sum_{i\in\{0,2\}} m_{z,i} ^{\ffcns(t)} \log_2\left(\frac{m_{z}^{\ffcns(t)} }{m_{z,i} ^{\ffcns(t)}}\right)-\log_2(|t|)-2, \\
\end{eqnarray*}
where the last inequality follows from the log-sum inequality.
Altogether, since $\mathcal{L}_k$ is the disjoint union of $\mathcal{L}^0_k$ and $ \mathcal{L}^1_k$, we obtain:
\begin{equation*}
H_k(t) = \sum_{z \in \mathcal{L}_k}    \sum_{i\in \{0,2\}} m_{z,i}^{\ffcns(t)} \log_2\left(\frac{m_{z}^{\ffcns(t)} }{m_{z,i} ^{\ffcns(t)}}\right) 
\leq  2H^{\deg}(t) + 2\log_2(|t|)+4.
\end{equation*}
This proves the theorem.
\qed\end{proof}
Moreover, as  $H^{\deg}(t) = H_k^{\ell}(t) +H_k^{\deg,\ell}(t)=H^{\deg}(t)+H_k^{\ell,\deg}(t)$ for every tree $t \in \ut(\{a\})$ and $k \geq 0$, we obtain the following corollary from Theorem~\ref{theo-degreeentropy}:

\begin{corollary}
For every unlabeled unranked tree $t \in \ut(\{a\})$ with $|t|\geq 2$ and integer $k\geq 1$, we have 
\begin{itemize}
\item $H_k(t)\leq 2(H^{\deg}(t)+H_k^{\deg,\ell}(t))+2\log_2(|t|)+4$, and
\item $H_k(t)\leq 2(H_k^{\ell,\deg}(t) + H_k^{\ell}(t))+2\log_2(|t|)+4$.
\end{itemize}
\end{corollary}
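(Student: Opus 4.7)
The plan is to read off the corollary almost immediately from Theorem~\ref{theo-degreeentropy} using the triviality of the label-based entropies when $\Sigma=\{a\}$. The only content needed beyond the theorem is the chain of identities stated in the bullet points just above the corollary, which I would just invoke rather than reprove.

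First, I would recall from the preceding subsection that for any $t \in \ut(\{a\})$ and $k \geq 1$ we have $H_k^{\ell}(t) = H_k^{\deg,\ell}(t) = 0$, because every node carries the same label $a$ so the inner sums in the definitions of these entropies vanish. I would also recall $H_k^{\ell,\deg}(t) = H^{\deg}(t)$: since every node has the same $k$-label-history $z = a^k$ and the same label $a$, the only counts that survive are $\tildem^t_{z,a} = |t|$ and $\tildem^t_{z,i,a} = \tildem^t_i$, so the defining sum of $H_k^{\ell,\deg}(t)$ collapses to the defining sum of $H^{\deg}(t)$.

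Combining these equalities yields
\begin{equation*}
H^{\deg}(t) \;=\; H^{\deg}(t) + H_k^{\deg,\ell}(t) \;=\; H_k^{\ell,\deg}(t) + H_k^{\ell}(t).
\end{equation*}
Substituting the outer expressions into the bound $H_k(t) \leq 2 H^{\deg}(t) + 2\log_2(|t|) + 4$ provided by Theorem~\ref{theo-degreeentropy} gives both desired inequalities simultaneously.

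There is essentially no obstacle here; the only thing to double check is that the identities listed above hold with the padding convention $\Box = a$ used in this subsection, so that no node at depth less than $k$ causes a discrepancy. Since the padding symbol is itself $a$, every node has $k$-label-history exactly $a^k$, which is precisely what the argument above needs, so the corollary follows at once.
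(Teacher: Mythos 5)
Your proposal is correct and follows exactly the paper's own route: the paper likewise observes that for $t \in \ut(\{a\})$ one has $H_k^{\ell}(t) = H_k^{\deg,\ell}(t) = 0$ and $H_k^{\ell,\deg}(t) = H^{\deg}(t)$, so that both bracketed sums equal $H^{\deg}(t)$, and then substitutes into the bound of Theorem~\ref{theo-degreeentropy}. Your extra remark about the padding convention $\Box = a$ is a sound sanity check but adds nothing beyond what the paper already assumes in that subsection.
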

It remains to remark that if we consider unranked trees over an alphabet $\Sigma$ of size $\sigma>1$, there are examples of families of trees, for which the degree entropy is asymptotically exponentially smaller than the $k^{th}$-order label-shape tree entropy. This is not very surprising as the label-shape entropy incorporates the node labels, while the degree entropy does not.

\subsection{Labeled unranked trees}

In this subsection, we consider general unranked labeled trees $t \in \ut(\Sigma)$ over alphabets $\Sigma$ of arbitrary size. The entropies to be compared in this general case are $H_k(t)$, $H^{\deg}(t)+H_k^{\deg,\ell}(t)$,  $H_k^{\ell,\deg}(t) + H_k^{\ell}(t)$ and $H^{\deg}(t) + H_k^{\ell}(t)$. 
Somewhat surprisingly it turns out that $H_k^{\ell}(t) + H_k^{\ell,\deg}(t)$ is always upper-bounded by $H^{\deg}(t)+H_k^{\deg,\ell}(t)$:

\begin{theorem}\label{thm-factor-2}
Let $t \in \ut(\Sigma)$. Then $H_k^{\ell}(t) + H_k^{\ell,\deg}(t) \leq H^{\deg}(t)+H_k^{\deg,\ell}(t)$.
\end{theorem}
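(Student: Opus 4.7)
The plan is to rewrite both sides so that the outermost sum ranges over triples $(z,i,a)$ and is weighted by the same coefficient $n_{z,i,a}^t$, so that the inequality reduces to a pointwise comparison of logarithms that can be tackled by the log-sum inequality (Lemma~\ref{logsum}).

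First, I would refine both sums using the marginal identities $n_{z,a}^t = \sum_i n_{z,i,a}^t$, $n_i^t = \sum_{z,a} n_{z,i,a}^t$ and $n_{z,i}^t = \sum_a n_{z,i,a}^t$. A direct telescoping in the logarithms gives the compact forms
\begin{equation*}
H_k^{\ell}(t) + H_k^{\ell,\deg}(t) \;=\; \sum_{z,a,i} n_{z,i,a}^t \log_2\!\left(\frac{n_z^t}{n_{z,i,a}^t}\right)
\end{equation*}
and
\begin{equation*}
H^{\deg}(t) + H_k^{\deg,\ell}(t) \;=\; \sum_{z,a,i} n_{z,i,a}^t \log_2\!\left(\frac{|t|\,n_{z,i}^t}{n_i^t\,n_{z,i,a}^t}\right).
\end{equation*}
Subtracting the first from the second, the $n_{z,i,a}^t$ factors inside the logs cancel, and summing over $a$ (using $\sum_a n_{z,i,a}^t = n_{z,i}^t$) collapses the triple sum into a double sum:
\begin{equation*}
\bigl(H^{\deg}(t) + H_k^{\deg,\ell}(t)\bigr) - \bigl(H_k^{\ell}(t) + H_k^{\ell,\deg}(t)\bigr) \;=\; \sum_{z,i} n_{z,i}^t \log_2\!\left(\frac{|t|\,n_{z,i}^t}{n_z^t\,n_i^t}\right).
\end{equation*}

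Thus the theorem reduces to showing that this last quantity is nonnegative. Information-theoretically it is the mutual information between label-history and degree (times $|t|$), but I would derive it directly from Lemma~\ref{logsum}: take $a_{z,i} = n_{z,i}^t$ and $b_{z,i} = n_z^t n_i^t / |t|$, so $\sum_{z,i} a_{z,i} = |t|$ and $\sum_{z,i} b_{z,i} = \frac{1}{|t|}\bigl(\sum_z n_z^t\bigr)\bigl(\sum_i n_i^t\bigr) = |t|$. The log-sum inequality then gives
\begin{equation*}
\sum_{z,i} n_{z,i}^t \log_2\!\left(\frac{n_z^t n_i^t/|t|}{n_{z,i}^t}\right) \;\leq\; |t| \log_2\!\left(\frac{|t|}{|t|}\right) \;=\; 0,
\end{equation*}
which, after multiplying by $-1$, is exactly what we need.

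There is no real obstacle beyond bookkeeping: the heart of the argument is recognizing the identity $H_k^{\ell}+H_k^{\ell,\deg} = H^{\deg}+H_k^{\deg,\ell} - D$ where $D$ is a nonnegative KL-type quantity. The only subtlety is handling the conventions $0\log(0)=0$ and $0\log(x/0)=0$ for summands with $n_{z,i,a}^t = 0$, but these were already fixed in the preliminaries and make the telescoping steps go through unchanged.
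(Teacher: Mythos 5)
Your proof is correct and follows essentially the same route as the paper: both arguments telescope the entropies via the marginal identities into sums weighted by $n_{z,i,a}^t$ and then conclude with the log-sum inequality (Lemma~\ref{logsum}). The only cosmetic difference is that the paper regroups the left-hand side as $\sum_{z,i} n_{z,i}^t\log_2(n_z^t/n_{z,i}^t) + H_k^{\deg,\ell}(t)$ and applies the log-sum inequality once per degree $i$ to bound the first term by $H^{\deg}(t)$, whereas you subtract the two sides and apply the log-sum inequality globally to the resulting mutual-information term $\sum_{z,i} n_{z,i}^t\log_2\bigl(|t|\,n_{z,i}^t/(n_z^t n_i^t)\bigr)\geq 0$ --- the same inequality after rearrangement.
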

\begin{proof}
We have
\begin{align*}
H_k^{\ell}(t)+H_k^{\ell, \deg}(t) &= \sum_{z \in \Sigma^k}\sum_{a \in \Sigma}n_{z,a}^t\log_2\left(\frac{n_z^t}{n_{z,a}^t}\right) + \sum_{z \in \Sigma^k}\sum_{a \in \Sigma}\sum_{i=0}^{|t|} n_{z,i,a}^t\log_2\left(\frac{n_{z,a}^t}{n_{z,i,a}^t}\right)\\
&= \sum_{z \in \Sigma^k}\sum_{a \in \Sigma}\sum_{i=0}^{|t|} n_{z,i,a}^t\log_2\left(\frac{n_z^t}{n_{z,a}^t}\right) + \sum_{z \in \Sigma^k}\sum_{a \in \Sigma}\sum_{i=0}^{|t|} n_{z,i,a}^t\log_2\left(\frac{n_{z,a}^t}{n_{z,i,a}^t}\right)\\
&= \sum_{z \in \Sigma^k}\sum_{a \in \Sigma}\sum_{i=0}^{|t|}n_{z,i,a}^t\log_2\left(\frac{n_z^t}{n_{z,i,a}^t}\right)\\
&= \sum_{z \in \Sigma^k}\sum_{a \in \Sigma}\sum_{i=0}^{|t|} n_{z,i,a}^t\log_2\left(\frac{n_z^t}{n_{z,i}^t}\right) + \sum_{z \in \Sigma^k}\sum_{a \in \Sigma}\sum_{i=0}^{|t|} n_{z,i,a}^t\log_2\left(\frac{n_{z,i}^t}{n_{z,i,a}^t}\right)\\
&=\sum_{z \in \Sigma^k}\sum_{i=0}^{|t|} n_{z,i}^t\log_2\left(\frac{n_z^t}{n_{z,i}^t}\right) + \sum_{z \in \Sigma^k}\sum_{a \in \Sigma}\sum_{i=0}^{|t|} n_{z,i,a}^t\log_2\left(\frac{n_{z,i}^t}{n_{z,i,a}^t}\right)\\
&\leq H^{\deg}(t) + H_k^{\deg, \ell}(t),
\end{align*}
where the inequality in the last line follows from the log-sum inequality (Lemma~\ref{logsum}).
This proves the theorem.
\end{proof}
As a corollary of Theorem~\ref{thm-factor-2} it turns out that 
$H^{\deg}(t)+H_k^{\deg,\ell}(t)$ and $H_k^{\ell}(t) + H^{\deg}(t)$ are equivalent up to a constant factor.

\begin{corollary} 
Let $t \in \ut(\Sigma)$. Then 
$$H^{\deg}(t)+H_k^{\deg,\ell}(t) \leq  H^{\deg}(t) + H_k^{\ell}(t) \leq 2H^{\deg}(t)+H_k^{\deg,\ell}(t).$$
\end{corollary}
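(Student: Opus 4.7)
The plan is to derive both inequalities directly from previously established results without any new computation, so the work is essentially bookkeeping rather than any genuine obstacle.

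For the left-hand inequality $H^{\deg}(t)+H_k^{\deg,\ell}(t) \leq H^{\deg}(t) + H_k^{\ell}(t)$, I would simply invoke Lemma~\ref{lemma-ganzorzentropien}, which gives $H_k^{\deg,\ell}(t) \leq H_k^{\ell}(t)$, and add the common term $H^{\deg}(t)$ to both sides.

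For the right-hand inequality $H^{\deg}(t) + H_k^{\ell}(t) \leq 2H^{\deg}(t)+H_k^{\deg,\ell}(t)$, the key observation is that this is equivalent to $H_k^{\ell}(t) \leq H^{\deg}(t)+H_k^{\deg,\ell}(t)$. This follows immediately from Theorem~\ref{thm-factor-2}, which provides the stronger bound $H_k^{\ell}(t) + H_k^{\ell,\deg}(t) \leq H^{\deg}(t)+H_k^{\deg,\ell}(t)$, simply by dropping the non-negative term $H_k^{\ell,\deg}(t)$ from the left side. (Non-negativity of $H_k^{\ell,\deg}(t)$ is clear from its definition as a sum of terms $n^t_{z,i,a} \log_2(n^t_{z,a}/n^t_{z,i,a})$ with $n^t_{z,a} \geq n^t_{z,i,a}$.) Adding $H^{\deg}(t)$ to both sides then yields the desired bound.

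There is no genuine obstacle here: the corollary is a two-line consequence of Theorem~\ref{thm-factor-2} together with Lemma~\ref{lemma-ganzorzentropien}, and the only substantive inequality (the log-sum argument) has already been performed in the proof of Theorem~\ref{thm-factor-2}.
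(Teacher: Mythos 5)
Your proof is correct and matches the paper's own argument: the first inequality comes from Lemma~\ref{lemma-ganzorzentropien}, and the second from Theorem~\ref{thm-factor-2} after dropping the non-negative term $H_k^{\ell,\deg}(t)$. You merely spell out the non-negativity step that the paper leaves implicit.
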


\begin{proof}
The first inequality follows from Lemma~\ref{lemma-ganzorzentropien}.
By Theorem \ref{thm-factor-2}, we have $H_k^{\ell}(t) \leq H^{\deg}(t)+H_k^{\deg,\ell}(t)$ from which the statement follows.
\qed\end{proof}
In the rest of the section we present three examples showing that in all cases that are not covered
by Theorem~\ref{thm-factor-2} we can achieve a non-constant (in most cases even exponential) separation between the corresponding entropies.

\begin{lemma}\label{lemma-vergleich1}
There exists a family of unranked trees $(t_n)_{n \geq 1}$ such that for all $n \geq 1$
and $1 \leq k \leq 2n$:
\begin{enumerate}[(i)]
\item $|t_n|=2n+1$,
\item $H_k(t_n)\leq \log_2(e)+ \log_2\left(n-\left\lfloor \frac{k-1}{2} \right\rfloor\right)+2$,
\item  $H_k^{\deg,\ell}(t_n) = 2n$ and hence $H^{\deg}(t_n)+H_k^{\deg,\ell}(t_n) \geq 2n$, and 
\item $H_k^{\ell}(t_n) \geq 2n$ and hence $H_k^{\ell}(t_n) + H_k^{\ell,\deg}(t_n) \geq 2n$.
\end{enumerate}
\end{lemma}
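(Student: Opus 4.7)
The plan is to take $t_n$ to be the \emph{alternating star}: fix $\Sigma=\{a_0,a_1,a_2\}$ and $\Box=a_0$, and let $t_n$ be the tree whose root is labeled $a_0$ and which has $2n$ ordered leaf children, the $j$-th being labeled $a_1$ if $j$ is odd and $a_2$ if $j$ is even. Part~(i) is immediate. For parts~(iii) and~(iv), observe that since $\Box=a_0$ every node of $t_n$ has the same padded $k$-label-history $z=a_0^k$; hence $n_z^{t_n}=2n+1$, $n_{z,a_0}^{t_n}=1$, $n_{z,a_1}^{t_n}=n_{z,a_2}^{t_n}=n$, and the only degree-$0$ nodes are the $2n$ children (with $n$ of each non-$a_0$ label). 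Plugging into the definitions gives
\[
H_k^\ell(t_n)=\log_2(2n+1)+2n\log_2\!\tfrac{2n+1}{n}>2n
\]
(since $(2n+1)/n>2$) and $H_k^{\deg,\ell}(t_n)=n\log_2 2+n\log_2 2=2n$ (the root, being the sole degree-$2n$ node, contributes $0$). The two ``hence'' inequalities follow from the non-negativity of $H^{\deg}(t_n)$ and $H_k^{\ell,\deg}(t_n)$.

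The delicate part is~(ii). I work in $\ffcns(t_n)$, which has root $a_0$ with a $\Box$-leaf right child and a left child starting a right-spine $v_1,v_2,\ldots,v_{2n}$, where $v_j$ carries label $a_1$ if $j$ is odd and $a_2$ if $j$ is even; each $v_j$ has a $\Box$-leaf left child, and $v_{2n}$ has an additional $\Box$-leaf right child. I group the nodes by their $k$-history and show that only two classes contribute non-trivially to $H_k(\ffcns(t_n))$. The root and $v_1$ share the padded history $(a_0 0)^k$ but carry $\lambda$-values $(a_0,2)$ and $(a_1,2)$, contributing exactly $2$. For $j>k$, the history of $v_j$ equals $\ell(v_{j-k})\,1\,\ell(v_{j-k+1})\,1\cdots\ell(v_{j-1})\,1$, which by the alternation of labels is determined solely by the parity of $j-k$; thus deep spine nodes fall into only two history-classes, each of which has a constant $\lambda$-value in $\{(a_1,2),(a_2,2)\}$ and therefore contributes $0$.

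The non-trivial contribution comes from the extra $\Box$-leaf that is the right child of $v_{2n}$: its history $\ell(v_{2n-k+1})\,1\cdots\ell(v_{2n})\,1$ has the same alternating pattern as exactly one of the two deep-spine parity-classes. A parity case-split on $k\bmod 2$ shows that the resulting joint class has size exactly $N:=n-\lfloor(k-1)/2\rfloor$, consisting of $N-1$ spine nodes sharing one $\lambda$-value in $\{(a_1,2),(a_2,2)\}$ together with the $\Box$-leaf (with $\lambda=(a_0,0)$). Its contribution to $H_k$ is
\[
(N-1)\log_2\!\tfrac{N}{N-1}+\log_2 N \;\leq\; \log_2 e+\log_2 N,
\]
using $\ln(1+x)\leq x$. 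All remaining nodes---padded spine nodes and $\Box$-leaves at depth $\leq k$, the $\Box$-leaf right child of the root, and the left-child $\Box$-leaves of deep spine nodes---have pairwise distinct $k$-histories within their ``kind'' (and no left-child $\Box$-leaf collides with a spine node because the former ends with direction $0$ and the latter with $1$); hence they contribute $0$. Summing yields $H_k(t_n)\leq \log_2 e+\log_2(n-\lfloor(k-1)/2\rfloor)+2$.

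The main obstacle is the bookkeeping: carefully executing the parity case-split so that the exact count $N=n-\lfloor(k-1)/2\rfloor$ emerges in both the $k$-even and $k$-odd subcases (rather than a weaker figure like $n-k$), and verifying that the $\Box$-leaves contribute nothing beyond the ``$+2$'' isolated from the root/$v_1$ collision. Once the equivalence classes under $h_k$ are correctly enumerated, the entropy estimate reduces to one application of the convexity bound $(N-1)\log_2(1+1/(N-1))\leq \log_2 e$.
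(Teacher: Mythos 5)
Your construction is the paper's tree $t_n=a((bc)^n)$ up to renaming the labels, and your analysis of the $\ffcns$-encoding (the two-element class $(\Box 0)^k$ contributing the $+2$, the single mixed class of size $N=n-\lfloor(k-1)/2\rfloor$ containing the final $\Box$-leaf, and all other classes being $\lambda$-constant) is exactly the paper's argument; the only cosmetic difference is that you bound $H_k^\ell$ directly rather than via $H_k^{\deg,\ell}\leq H_k^\ell$. The proposal is correct and essentially identical to the paper's proof.
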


\begin{proof}
Define the unranked tree $t_n$ as $t_n = a((bc)^n)$, that is, $t_n$ is a tree consisting of a root node of degree $2n$ labeled with $a$ and $2n$ leaves, of which $n$ many leaves are labeled $b$ and $n$ many leaves are labeled $c$. The tree $t_3$ is depicted in Figure~\ref{fig-t5} on the left.
First, we compute the degree-label entropy of $t_n$: Let $\Box =a$ denote the fixed dummy symbol used to pad histories shorter than $k$. For every node $v$ of $t_n$, we have $\lh_k(v)=a^k$, which yields
\begin{equation*}
H_k^{\deg,\ell}(t_n) = \sum_{z \in \Sigma^k}\sum_{i=0}^{|t|} \sum_{a \in \Sigma}\tildem_{z,i,a}^t\log_2\left(\frac{\tildem_{z,i}^t}{\tildem_{z,a,i}^t}\right) =\log_21 + n\log_22+ n\log_22 = 2n.
\end{equation*}
This shows statement (iii). Moreover, (iv) follows from $H_k^{\deg,\ell}(t_n) = 2n$ and Lemma~\ref{lemma-ganzorzentropien}.

It remains to compute the $k^{th}$-order label-shape entropy of $t_n$: 
For this, we have to consider the first-child next-sibling encoding of $t_n$. Let $\Box=a$ denote the dummy symbol labeling the leaves of $\ffcns(t_n)$
as well as the dummy symbol used to pad histories shorter than $k$. 
The tree $\ffcns(t_3)$ is depicted in Figure~\ref{fig-t5} on the right.
Intuitively, most $k$-histories determine the $\lambda$-value of the corresponding node, which leads to a low
$k^{th}$-order label-shape entropy. Formally, consider the $k$-history $(b1c1)^{k/2}$, if $k$ is even, respectively, $(c1b1)^{(k-1)/2}c1$, if $k$ is odd. There are $n-\lfloor (k-1)/2 \rfloor$ many nodes of this history in $\ffcns(t_n)$, and $n-\lfloor (k-1)/2 \rfloor-1$ many of them are inner nodes labeled $b$ while one of them is a leaf labeled $\Box=a$. Furthermore, consider the $k$-history $(a0)^k$. There are two nodes of this $k$-history, one of them labeled $a$ (the root node) and one of them labeled $b$ (the left child of the root node). For all other $k$-histories $z$ occurring in $\ffcns(t_n)$, we find that all nodes with $k$-history $z$ have the same  $\lambda$-value. Thus, we have
\begin{eqnarray*}
H_k(t_n) &=& \left(n-\left\lfloor \frac{k-1}{2} \right\rfloor-1\right)\log_2\left(\frac{n-\lfloor \frac{k-1}{2} \rfloor}{n-\lfloor \frac{k-1}{2} \rfloor-1}\right)+\log_2\left(n-\left\lfloor \frac{k-1}{2} \right\rfloor\right)+2\\
&\leq & \log_2(e)+ \log_2\left(n-\left\lfloor \frac{k-1}{2} \right\rfloor\right)+2.
\end{eqnarray*}
This shows statement (ii).
\qed\end{proof}
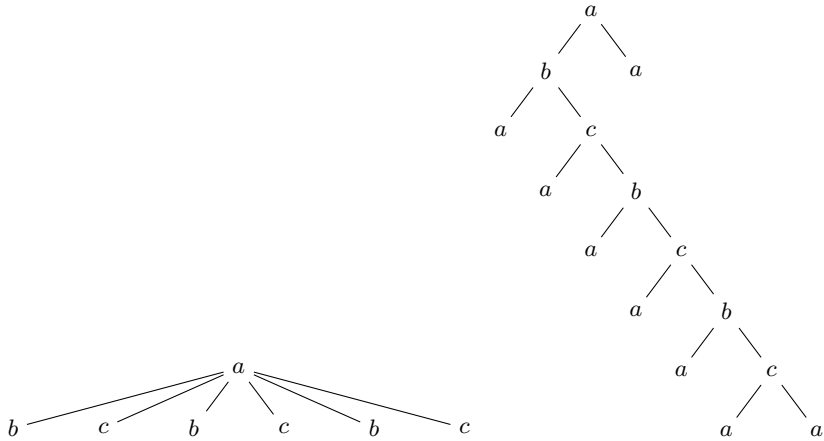
\begin{figure}[t]
\begin{center}
		\tikzset{level 1/.style={sibling distance=12mm}}
		\tikzset{level 2/.style={sibling distance=12mm}}
		\tikzset{level 3/.style={sibling distance=12mm}}  
		\tikzset{level 4/.style={sibling distance=12mm}}
		\begin{tikzpicture}[scale=1,auto,swap,level distance=8mm]
		\node (eps) {$a$} 
		child {node {$b$}}
		  child {node{$c$} }
		child {node {$b$}}
		child {node {$c$}}
		child {node {$b$}}
		child{node{$c$}};
		\end{tikzpicture}
		\begin{tikzpicture}[scale=1,auto,swap,level distance=8mm]
		\node (eps) {$a$} 
		child {node {$b$}
		  child {node{$a$} }
		  child {node{$c$}
		  child {node{$a$} }
		  child {node {$b$}
		  child {node{$a$} }
		  child {node{$c$}
		  child {node{$a$} }
		  child {node {$b$}
		  child {node{$a$} }
		  child {node{$c$}
		  child {node{$a$} }
		  child{node{$a$}}}}}}}}
		child{node{$a$}};
		\end{tikzpicture}
	\end{center}
	\caption{The binary tree $t_3$ from Lemma~\ref{lemma-vergleich1} (left) and its first-child next-sibling encoding $\ffcns(t_3)$ (right).}	
	\label{fig-t5}
\end{figure}
Lemma~\ref{lemma-vergleich1} shows that there are not only families of binary trees, but also families of unranked (non-binary) trees $(t_n)_{n \geq 1}$ (for which we have to compute $H_k(t_n)$ via the $\ffcns$-endcoding) such that $|t_n|=\Theta(n)$ and $H_k(t_n)$ is exponentially smaller than $H^{\deg}(t_n)+H_k^{\deg,\ell}(t_n)$ and $H_k^{\ell,\deg}(t_n) + H_k^{\ell}(t_n)$. The next lemma shows that there are also families of trees $(t_n)_{n \geq 1}$ such that $H_k^{\ell,\deg}(t_n) +H_k^{\ell}(t_n)$ is (even more than) exponentially smaller than $H^{\deg}(t_n) + H_k^{\deg,\ell}(t_n)$ (and thus, than $H^{\deg}(t_n)+ H_k^{\ell}(t_n)$) and $H_k(t_n)$:

\begin{lemma}\label{lemma-vergleich2}
There exists a family of unranked trees $(t_n)_{n \geq 1}$ such that for all $n \geq 1$ and $1 \leq k \leq n$:
\begin{enumerate}[(i)]
\item $|t_n|=3n+3$,
\item  $H_k(t_n) \geq 2(n-k+1)$,
\item $H^{\deg}(t_n) + H_k^{\deg,\ell}(t_n)\geq 2 n$ and
\item $H_k^{\ell}(t_n)+H_k^{\ell,\deg}(t_n)=3\log_2(3)$.
\end{enumerate}
\end{lemma}

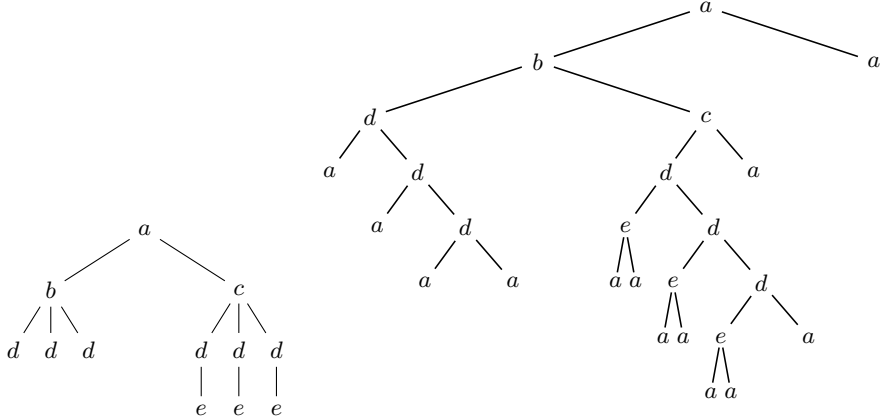
\begin{figure}[t]
\begin{center}
		\tikzset{level 1/.style={sibling distance=25mm}}
		\tikzset{level 2/.style={sibling distance=5mm}}
		\tikzset{level 3/.style={sibling distance=5mm}}  
		\tikzset{level 4/.style={sibling distance=12mm}}
		\begin{tikzpicture}[scale=1,auto,swap,level distance=8mm]
		\node (eps) {$a$} 
		child {node {$b$}
		child {node {$d$}}
		child {node {$d$}}
		child {node {$d$}}}
		  child {node{$c$} 
		  child {node {$d$}
		  child {node {$e$}}}
		  child {node {$d$}
		  child {node {$e$}}}
		  child {node {$d$}
		  child {node {$e$}}}};
		\end{tikzpicture}
		\begin{tikzpicture}[-,>=stealth',shorten <=-5.5pt, shorten >=-5.5pt, auto, node distance=2em and 0.5em, semithick]
  \tikzstyle{every state}=[]
\node[state, draw=none](a)[]{$a$};
\node[state, draw=none](n1)[below=-0.1cm of a]{};
\node[state, draw=none](b)[left=1.4cm of n1]{$b$};
\node[state, draw=none](a1)[right=1.4cm of n1]{$a$};
\node[state, draw=none](n2)[below=-0.1cm of b]{};
\node[state, draw=none](bd1)[left=1.4cm of n2]{$d$};
\node[state, draw=none](c)[right=1.4cm of n2]{$c$};
\node[state, draw=none](n3)[below=-0.1cm of bd1]{};
\node[state, draw=none](a2)[left=-0.3cm of n3]{$a$};
\node[state, draw=none](bd2)[right=-0.2cm of n3]{$d$};
\node[state, draw=none](n4)[below=-0.1cm of bd2]{};
\node[state, draw=none](a3)[left=-0.3cm of n4]{$a$};
\node[state, draw=none](bd3)[right=-0.2cm of n4]{$d$};
\node[state, draw=none](n4)[below=-0.1cm of bd3]{};
\node[state, draw=none](a4)[left=-0.3cm of n4]{$a$};
\node[state, draw=none](bd4)[right=-0.2cm of n4]{$a$};

\node[state, draw=none](n7)[below=-0.1cm of c]{};
\node[state, draw=none](cd1)[left=-0.3cm of n7]{$d$};
\node[state, draw=none](a8)[right=-0.2cm of n7]{$a$};

\node[state, draw=none](n8)[below=-0.1cm of cd1]{};
\node[state, draw=none](a9)[left=-0.3cm of n8]{$e$};
\node[state, draw=none](m7)[below=-0.1cm of a9]{};
\node[state, draw=none](a36)[left=-0.7cm of m7]{$a$};
\node[state, draw=none](a37)[right=-0.7cm of m7]{$a$};

\node[state, draw=none](cd2)[right=-0.2cm of n8]{$d$};
\node[state, draw=none](n9)[below=-0.1cm of cd2]{};
\node[state, draw=none](a10)[left=-0.3cm of n9]{$e$};
\node[state, draw=none](m8)[below=-0.1cm of a10]{};
\node[state, draw=none](a46)[left=-0.7cm of m8]{$a$};
\node[state, draw=none](a47)[right=-0.7cm of m8]{$a$};

\node[state, draw=none](cd3)[right=-0.2cm of n9]{$d$};
\node[state, draw=none](n10)[below=-0.1cm of cd3]{};
\node[state, draw=none](a11)[left=-0.3cm of n10]{$e$};
\node[state, draw=none](m9)[below=-0.1cm of a11]{};
\node[state, draw=none](a56)[left=-0.7cm of m9]{$a$};
\node[state, draw=none](a57)[right=-0.7cm of m9]{$a$};

\node[state, draw=none](cd4)[right=-0.2cm of n10]{$a$};

\path (a) edge node{} (b);
\path (a) edge node{} (a1);

\path (b) edge node{} (bd1);
\path (b) edge node{} (c);

\path (bd1) edge node{} (bd2);
\path (bd1) edge node{} (a2);

\path (bd2) edge node{} (bd3);
\path (bd2) edge node{} (a3);

\path (bd3) edge node{} (bd4);
\path (bd3) edge node{} (a4);

\path (c) edge node{} (cd1);
\path (c) edge node{} (a8);

\path (cd1) edge node{} (cd2);
\path (cd1) edge node{} (a9);

\path (cd2) edge node{} (cd3);
\path (cd2) edge node{} (a10);

\path (cd3) edge node{} (cd4);
\path (cd3) edge node{} (a11);

\path (a9) edge node{} (a36);
\path (a9) edge node{} (a37);

\path (a10) edge node{} (a46);
\path (a10) edge node{} (a47);

\path (a11) edge node{} (a56);
\path (a11) edge node{} (a57);

 \end{tikzpicture}		
 \end{center}		
	\caption{The binary tree $t_3$ from Lemma~\ref{lemma-vergleich2} (left) and its first-child next-sibling encoding $\ffcns(t_3)$ (right)}	
	\label{fig-t3}
\end{figure}

\begin{proof}
Let $\Sigma=\{a,b,c,d,e\}$.
We define the tree  $t_n$ as $t_n = a(b(d^n)c(d(e)^n))$. That is, $t_n$ is a tree consisting of a root node of degree two, whose left child is of degree $n$ and labeled $b$ and whose right child is of degree $n$ and  labeled $c$. Moreover, the children of the left child of the root are leaves labeled with $d$ and the children of the right child of the root are unary nodes labeled with $d$, whose children are leaves labeled with $e$. 
The tree $t_3$ is depicted in Figure~\ref{fig-t3} on the left. We have $|t_n|=3n+3$.
Let $\Box = a$ denote the fixed dummy symbol used to pad histories of length shorter than $k$. 
We start with computing $H_k^{\ell}(t_n)$. There are three nodes $v$ with $k$-label-history $\lh_k(v)=a^k$: The root node (labeled with $a$) and its two children (one of them labeled with $b$, one of them labeled with $c$). Moreover, there are $n$ nodes with $k$-label-history $a^{k-1}b$ (all of them labeled with $d$) and $n$ nodes with $k$-label-history $a^{k-1}c$ (all of them labeled with $d$). Finally, there are $n$ nodes with $k$-label-history $a^{k-2}cd$, all of them are labeled with $e$. We obtain:
\begin{equation*}
H_k^{\ell}(t_n) = \sum_{z \in \Sigma^k}\sum_{a \in \Sigma}\tildem_{z,a}^t\log_2\left(\frac{\tildem_z^t}{\tildem_{z,a}^t}\right) = 3\log_2(3).
\end{equation*}
In order to compute the label-degree history of $t_n$, we observe that the $k$-label-history and the label of a node of $t_n$ uniquely determine the degree of the node.
This implies $H_k^{\ell,\deg}(t_n)=0$. Altogether, this yields
\begin{align*}
H_k^{\ell}(t_n)+H_k^{\ell,\deg}(t_n)=3\log_2(3).
\end{align*}
Next, we compute $H^{\deg}(t_n)$: A tree $t_n$ consists of a node of degree $2$ (the root node), two nodes of degree $n$ (the two children of the root node), $n$ unary nodes and $2n$ leaf nodes. Thus, the degree entropy satisfies
\begin{eqnarray*}
  H^{\deg}(t_n) &=& \sum_{i=1}^{|t|}n_i^t\log_2\left(\frac{|t|}{n_i^t}\right)\\
&=& \log_2(3n+3)+2\log_2\left(\frac{3n+3}{2}\right)+ \\
& & n\log_2\left(\frac{3n+3}{n}\right)+ 2n\log_2\left(\frac{3n+3}{2n}\right) \\
&\geq & 2n.
\end{eqnarray*}
We thus have $H^{\deg}(t_n) + H_k^{\deg,\ell}(t_n) \geq 2n$.
It remains to lower-bound $H_k(t_n)$. For this, we have to consider the first-child next-sibling encoding of $t_n$. Let $\Box=a$ denote the dummy symbol used to label the leaf nodes in $\ffcns(t_n)$.
The tree $\ffcns(t_3)$ is depicted in Figure~\ref{fig-t3} on the right.
We lower-bound $H_k(t_n)$ by considering only nodes in $\ffcns(t_n)$ with $k$-history $(d1)^{k-1}d0$.
There are $2(n-k+1)$ such nodes and half of them are labeled with $a$ while the other half of them are labeled with $e$.
We thus have $H_k(t_n) \geq 2(n-k+1)$.
\qed
\end{proof}
Note that we clearly need $\Omega(\log n)$ bits to represent the tree $t_n$  from the above proof 
(since we have to represent its size). 
This does not contradict Theorem~\ref{theorem-ganczorzentropybounds} and 
the $\mathcal{O}(1)$-bound for $H_k^{\ell}(t_n)+H_k^{\ell,\deg}(t_n)$ in  Lemma~\ref{lemma-vergleich2}, since we have the additional
additive term of order $o(|t|)$ in Theorem~\ref{theorem-ganczorzentropybounds}.

In the following lemma, $n^{\underline{k}} = n (n-1) \cdots (n-k+1)$ denotes the falling factorial.
\begin{lemma} \label{lemma-vergleich-last}
There exists a family of unranked trees $(t_{n,k})_{n \geq 1}$ (where $k(n) \leq n$ may depend on $n$) 
such that for all $n \geq 1$:
\begin{enumerate}[(i)]
\item $|t_{n,k}| = 1 + n^{\underline{k}} + k \cdot n \cdot n^{\underline{k}}$,
\item $H^{\deg}(t_{n,k}) + H_1^{\ell}(t_{n,k}) \leq  \mathcal{O}(n \cdot n^{\underline{k}}\cdot k \cdot \log k )$ and
\item  $H_{k-1}(t_{n,k}) \geq  \Omega(  n \cdot n^{\underline{k}} \cdot k \cdot \log(n-k+1) )$.
\end{enumerate}
\end{lemma}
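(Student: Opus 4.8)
The plan is to exhibit an explicit family built from injective sequences, so that the \emph{values} of the sequence entries drive the label-shape entropy $H_{k-1}$ up while staying essentially invisible to $H_1^{\ell}$ and $H^{\deg}$. I would fix a label set containing $\{1,\dots,n\}$, one fresh symbol $r$ for the root, and $n^{\underline{k}}$ pairwise distinct fresh symbols $c_{\vec a}$, one for each injective $\vec a=(a_1,\dots,a_k)$ with entries in $\{1,\dots,n\}$. The tree $t_{n,k}$ is then the depth-$2$ tree whose root is labeled $r$ and has exactly the $n^{\underline{k}}$ children $c_{\vec a}$, where each $c_{\vec a}$ carries $kn$ leaf children labeled, left to right, by the periodic word $(a_1 a_2 \cdots a_k)^n$ over $\{1,\dots,n\}$. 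This gives $|t_{n,k}| = 1 + n^{\underline{k}} + k\,n\,n^{\underline{k}}$, proving (i).

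For (ii) I would note that only three degrees occur: the root has degree $n^{\underline{k}}$, each $c_{\vec a}$ has degree $kn$, and the $k\,n\,n^{\underline{k}}$ leaves have degree $0$. Since the leaves make up all but an $o(1)$ fraction of the nodes, the degree-$0$ summand of $H^{\deg}$ is $\mathcal{O}(n^{\underline{k}})$ and the other two summands contribute only $\mathcal{O}(n^{\underline{k}}\log(kn))$, so $H^{\deg}(t_{n,k}) = \mathcal{O}(n^{\underline{k}}(\log k+\log n))$. For $H_1^{\ell}$ the point is that the $c_{\vec a}$ are pairwise distinct: a leaf's $1$-label-history is the label of its parent $c_{\vec a}$, which pins down the subtree, and inside one subtree only $a_1,\dots,a_k$ occur, each exactly $n$ times among the $kn$ leaves. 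Hence each subtree contributes exactly $kn\log k$ to $H_1^{\ell}$, for a leaf-total of $k\,n\,n^{\underline{k}}\log k$; the $c_{\vec a}$-nodes add only $n^{\underline{k}}\log(n^{\underline{k}})\le n^{\underline{k}}k\log n$, which is of lower order in the relevant regime $k\ge 2$ (where $\log k\ge 1$). Summing yields $H^{\deg}(t_{n,k})+H_1^{\ell}(t_{n,k}) = \mathcal{O}(n\,n^{\underline{k}}\,k\log k)$.

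For (iii) I would pass to $\ffcns(t_{n,k})$ and localize the entropy on the sibling chains. There the $kn$ leaves of a fixed $c_{\vec a}$ form a next-sibling (right) chain, so each is a right child and hence an internal, degree-$2$ node carrying its $\{1,\dots,n\}$-label; a leaf in chain-position $p\ge k$ has a clean $(k-1)$-history that records exactly the labels of its $k-1$ predecessors, each paired with the direction bit $1$. As $(a_1\cdots a_k)^n$ has period $k$, every such window $W$ is an injective $(k-1)$-sequence, and the node's own $\lambda$-value is $(x,2)$, where $x$ is the cyclic successor of $W$ in $\vec a$. The count then runs as follows: for a fixed injective window $W$ and a fixed $x\notin W$, the subtrees in which $W$ is immediately followed by $x$ are exactly the $k$ cyclic rotations of $(W,x)$, each of which contains $\Theta(n)$ interior leaves realizing this; hence $m^{\ffcns(t_{n,k})}_{z,(x,2)}=\Theta(kn)$ for the history $z$ encoding $W$, while $x$ ranges over the $n-k+1$ elements outside $W$. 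Each window therefore contributes $\Theta(kn(n-k+1)\log(n-k+1))$, and since there are $n^{\underline{k-1}}$ injective windows with $n^{\underline{k-1}}(n-k+1)=n^{\underline{k}}$, summation gives $H_{k-1}(t_{n,k})\ge\Omega(k\,n\,n^{\underline{k}}\log(n-k+1))$; here I use that every summand defining $H_{k-1}$ is nonnegative, so I may simply discard the $c_{\vec a}$- and root-nodes, the $k-1$ boundary leaves per chain, and the added $\Box$-leaves.

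The main obstacle is the bookkeeping in (iii). I must verify that equal label-windows really produce equal full $(k-1)$-histories across different subtrees, which hinges on every step along a sibling chain carrying the same direction bit $1$, so that directions never separate two subtrees; and I must pin down the counts $m_{z,(x,2)}$, confirming that a word of $k$ distinct symbols has exactly $k$ cyclic rotations, that each rotation yields $n$ or $n-1$ clean interior occurrences of the window, and that the $c_{\vec a}$-labels (sitting $k$ steps back) never intrude into an interior history. All of these perturb only lower-order terms, so a conservative count of the interior leaves, combined with the bound $m_z/m_{z,(x,2)}\ge (n-k+1)/2$, already delivers the stated $\Omega$-bound.
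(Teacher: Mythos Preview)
Your construction is isomorphic to the paper's (your $r$, $c_{\vec a}$, and $\{1,\dots,n\}$ correspond to the paper's $a$, $b_u$, and $\{c_1,\dots,c_n\}$), and your arguments for all three parts follow the same lines: the same degree census for $H^{\deg}$, the same observation that the pairwise distinct middle-layer labels make each leaf's $1$-label-history pin down its subtree so that $H_1^{\ell}$ reduces to $kn\log_2 k$ per subtree, and the same localization of $H_{k-1}$ on the right-sibling chains via injective $(k-1)$-windows. The paper carries out the window count explicitly as $m_{z_u,(c_j,2)}=n+(n-1)(k-1)$ and $m_{z_u}=(n-k+1)(n+1+(n-1)(k-1))$, which confirms your $\Theta(kn)$ and $m_z/m_{z,(x,2)}\ge n-k+1$; your sketch of the bookkeeping (direction bits all $1$ for $p\ge k$, rotations giving $n$ or $n-1$ clean occurrences, middle-layer labels not intruding) is exactly what is needed.
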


\begin{figure}[t]
\begin{center}
		\tikzset{level 1/.style={sibling distance=22mm}}
		\tikzset{level 2/.style={sibling distance=3.5mm}}
		\begin{tikzpicture}[scale=.9,auto,swap,level distance=8mm]
		\node (r) {\scriptsize $a$} 
		child {node {\scriptsize $b_{1,2}$}
		   child {node {\scriptsize $c_{1}$}}
		   child {node {\scriptsize $c_{2}$}}   
		   child {node {\scriptsize $c_{1}$}}
		   child {node {\scriptsize $c_{2}$}}
		   child {node {\scriptsize $c_{1}$}}
		   child {node {\scriptsize $c_{2}$}}}
	        child {node {\scriptsize $b_{2,1}$}
		   child {node {\scriptsize $c_{2}$}}
		   child {node {\scriptsize $c_{1}$}}   
		   child {node {\scriptsize $c_{2}$}}
		   child {node {\scriptsize $c_{1}$}}
		   child {node {\scriptsize $c_{2}$}}
		   child {node {\scriptsize $c_{1}$}}}
		child {node {\scriptsize $b_{1,3}$}
		   child {node {\scriptsize $c_{1}$}}
		   child {node {\scriptsize $c_{3}$}}   
		   child {node {\scriptsize $c_{1}$}}
		   child {node {\scriptsize $c_{3}$}}
		   child {node {\scriptsize $c_{1}$}}
		   child {node {\scriptsize $c_{3}$}}}
	        child {node {\scriptsize $b_{3,1}$}
		   child {node {\scriptsize $c_{3}$}}
		   child {node {\scriptsize $c_{1}$}}   
		   child {node {\scriptsize $c_{3}$}}
		   child {node {\scriptsize $c_{1}$}}
		   child {node {\scriptsize $c_{3}$}}
		   child {node {\scriptsize $c_{1}$}}}
	        child {node {\scriptsize $b_{2,3}$}
		   child {node {\scriptsize $c_{2}$}}
		   child {node {\scriptsize $c_{3}$}}   
		   child {node {\scriptsize $c_{2}$}}
		   child {node {\scriptsize $c_{3}$}}
		   child {node {\scriptsize $c_{2}$}}
		   child {node {\scriptsize $c_{3}$}}}
	        child {node {\scriptsize $b_{3,2}$}
		   child {node {\scriptsize $c_{3}$}}
		   child {node {\scriptsize $c_{2}$}}   
		   child {node {\scriptsize $c_{3}$}}
		   child {node {\scriptsize $c_{2}$}}
		   child {node {\scriptsize $c_{3}$}}
		   child {node {\scriptsize $c_{2}$}}};	   
 \end{tikzpicture}		
 \end{center}		
	\caption{The tree $t_{3,2}$ from Lemma~\ref{lemma-vergleich-last}.}	
	\label{fig-t32}
\end{figure}
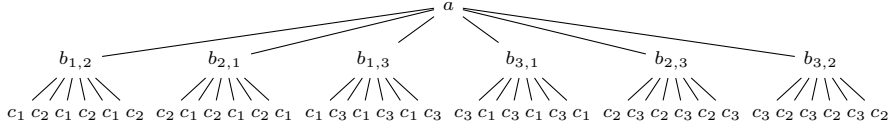

\begin{proof}
Let $[n]^{\underline{k}} = \{ (i_1,i_2\ldots,i_k) \mid 1 \leq i_1, \ldots, i_k \leq n, i_j \neq i_l \text{ for } j \neq l \}$.
The tree $t_{n,k}$ is defined over the label alphabet 
$$
\Sigma_{n,k} = \{a\} \cup \{ b_{u} \mid u \in [n]^{\underline{k}} \} \cup \{ c_i \mid 1 \leq i \leq n\}   .
$$
For $u = (i_1, i_2, \ldots, i_k) \in [n]^{\underline{k}}$ let us define the tree
$t_u = b_u( (c_{i_1} c_{i_2} \cdots c_{i_k})^n)$ and let
$$
t_{n,k} = a(t_{u_1} t_{u_2} \cdots t_{u_m}),
$$ 
where $u_1, u_2, \ldots, u_m$ is an arbitrary enumeration of the set $[n]^{\underline{k}}$ (hence, $m = n^{\underline{k}}$).
The tree $t_{3,2}$ is shown in Figure~\ref{fig-t32}.
We have $|t_{n,k}| = 1 + n^{\underline{k}} + k \cdot n \cdot n^{\underline{k}}$. 

Let us first compute $H^{\deg}(t_{n,k})$. There are (i) one node of degree $n^{\underline{k}}$, (ii) $n^{\underline{k}}$ nodes
of degree $k \cdot n$ and  (iii) $k \cdot n \cdot n^{\underline{k}}$ leaves. Hence, we obtain
\begin{eqnarray*}
H^{\deg}(t_{n,k})  & = & \log_2(1 + n^{\underline{k}} + k \cdot n \cdot n^{\underline{k}}) + n^{\underline{k}} \cdot \log_2 \left( \frac{1 + n^{\underline{k}} + k \cdot n \cdot n^{\underline{k}}}{n^{\underline{k}} }\right)
+ \\
&& k \cdot n \cdot n^{\underline{k}} \cdot \log_2 \left( \frac{1 + n^{\underline{k}} + k \cdot n \cdot n^{\underline{k}}}{k \cdot n \cdot n^{\underline{k}}}\right) \\
& = & \log_2(k \cdot n \cdot n^{\underline{k}}) + n^{\underline{k}} \cdot \log_2\left(1 + k \cdot n + \frac{1}{n^{\underline{k}}}\right) + \\
& & k \cdot n \cdot n^{\underline{k}} \cdot  \log_2\left(1 + \frac{1}{k \cdot n} +  \frac{1}{k \cdot n \cdot n^{\underline{k}}}\right) + \mathcal{O}(1) \\
& = &  \log_2(k \cdot n \cdot n^{\underline{k}}) +  n^{\underline{k}} \cdot \log_2(k \cdot n) + k \cdot n \cdot n^{\underline{k}} \cdot  \log_2\left(1 +  \frac{1}{k \cdot n}\right) + \mathcal{O}(1)  \\
& \leq &  \log_2(k \cdot n \cdot n^{\underline{k}}) +  n^{\underline{k}} \cdot \log_2(k \cdot n) +  \mathcal{O}(n^{\underline{k}})  \\
& \leq & \mathcal{O}(n^{\underline{k}} \cdot \log n), 
\end{eqnarray*}
where we used $k \leq n$ and the inequality $\log_2(1+x) \leq x/\ln 2$ for $x \geq 0$.

Next, we compute $H_1^{\ell}(t_{n,k})$. As usual, we choose $\Box = a$ for the padding symbol.
There are $1+n^{\underline{k}}$ nodes with $1$-label-history $a$ (the root and its children), which are labeled with pairwise 
different symbols. Moreover, for every $u = (i_1, i_2, \ldots, i_k) \in [n]^{\underline{k}}$
there are $kn$ nodes with $1$-label-history $a_u$, of which $n$ nodes are labeled with $a_{i_j}$ for every $1 \leq j \leq k$.
We therefore obtain
\begin{eqnarray*}
H^{\ell}_1(t_{n,k})  & = & (1+n^{\underline{k}}) \cdot \log_2(1+n^{\underline{k}}) + \sum_{u \in [n]^{\underline{k}}} \sum_{i=1}^k n \cdot \log_2 k \\
& = & (1+n^{\underline{k}}) \cdot \log_2(1+n^{\underline{k}}) + n \cdot n^{\underline{k}} \cdot k \cdot \log_2 k \\
& \leq &  \mathcal{O}( n \cdot n^{\underline{k}} \cdot k \cdot \log_2 k )
\end{eqnarray*}
In summary, we get $H^{\deg}(t_{n,k})+ H^{\ell}_1(t_{n,k}) \leq \mathcal{O}( n \cdot n^{\underline{k}}\cdot k \cdot \log k)$.

It remains to bound $H_{k-1}(t_{n,k})$. For this we only consider $(k-1)$-histories of the form
$c_{i_1} 1 c_{i_2} 1 \cdots c_{i_{k-1}}1$, where $(i_1,i_2\ldots,i_{k-1}) \in [n]^{\underline{k-1}}$.
Let us denote this $(k-1)$-history by $z_u$, where $u = (i_1,i_2\ldots,i_{k-1})$. In the following, we identify the nodes of $t_{n,k}$
with the inner nodes of $\ffcns(t_{n,k})$.
For every symbol $c_j$ such that $j \notin \{ i_1, \ldots, i_{k-1}\}$ there are $n + (n-1) (k-1)$ nodes
$v$ with $(k-1)$-history $z_u$ in $\ffcns(t_{n,k})$ and $\lambda(v) = (c_j,2)$: $n$ children of $b_{u'}$ where $u' = (i_1, \ldots, i_{k-1},j)$ and 
$n-1$ children of $b_{u''}$ where $u''$ is a cyclic rotation of $u'$ with $u'' \neq u'$ (there are $(k-1)$
such rotations). Moreover, there are $n-k+1$ nodes with  $(k-1)$-history $z_u$ and $\lambda(v) = (a,0)$.
In total we obtain $(n + (n-1) (k-1)) (n-k+1) + (n-k+1) = (n-k+1) (n + 1 + (n-1) (k-1))$ nodes with $(k-1)$-history $z_u$.
By computing the contribution of these nodes (for all $u \in [n]^{\underline{k-1}}$) to $H_{k-1}(t_{n,k})$ we obtain
\begin{eqnarray*}
& & H_{k-1}(t_{n,k}) \\ 
& \geq & \sum_{u \in [n]^{\underline{k-1}}} (n-k+1) (n + (n-1) (k-1))  \log_2 \left( n-k+1 + \frac{n-k+1}{n + (n-1) (k-1)} \right) + \\
& &  \sum_{u \in [n]^{\underline{k-1}}} (n-k+1) \log_2( n + 1 + (n-1) (k-1)) \\
& \geq & n^{\underline{k}} \cdot ((n + (n-1) (k-1))  \log_2(n-k+1) +  \log_2( n + 1 + (n-1) (k-1))) \\
& \geq & \Omega(  n \cdot n^{\underline{k}}  \cdot k \cdot \log(n-k+1) ).
\end{eqnarray*}
This concludes the proof of the lemma.
\qed
\end{proof}
If $k \in (\log n)^{\mathcal{O}(1)}$ then the trees $t_{n,k}$ from Lemma~\ref{lemma-vergleich-last} satisfy
\begin{equation} \label{eq-separation}
\frac{H^{\deg}(t_{n,k}) + H_1^{\ell}(t_{n,k})}{H_k(t_{n,k})} \leq \mathcal{O}\left( \frac{\log k}{\log(n-k+1)} \right)  = o(1).
\end{equation}

\section{Experiments}

\begin{table}[h!]
\centering\scriptsize
    \begin{tabular}{l || R{0.5cm} | R{2cm} | R{2cm} | R{2cm} | R{2cm}}
    XML          & $k$ & $H_k$ & $H^{\deg}+H^{\ell}_k$ & $H^{\ell}_k+H^{\ell,\deg}_k$ & $H^{\deg}+H^{\deg,\ell}_k$ \\
    \midrule
    \midrule
\fname{BaseBall}&0&202\ 568.08&153\ 814.94&146\ 066.64&146\ 066.64\\
&1&6\ 348.08&145\ 705.73&137\ 957.42&145\ 323.26\\
&2&2\ 671.95&145\ 705.73&137\ 957.42&145\ 323.26\\
&4&1\ 435.11&145\ 705.73&137\ 957.42&145\ 323.26\\
\midrule
\fname{DBLP}&0&18\ 727\ 523.44&14\ 576\ 781.00&12\ 967\ 501.16&12\ 967\ 501.16\\
&1&2\ 607\ 784.68&12\ 137\ 042.56&10\ 527\ 690.38&12\ 076\ 935.39\\
&2&2\ 076\ 410.50&12\ 136\ 974.71&10\ 527\ 595.96&12\ 076\ 845.69\\
&4&1\ 951\ 141.63&12\ 136\ 966.29&10\ 527\ 586.31&12\ 076\ 836.82\\
\midrule
\fname{EXI-Array}&0&1\ 098\ 274.54&962\ 858.05&649\ 410.59&649\ 410.59\\
&1&4\ 286.39&387\ 329.51&73\ 882.05&387\ 304.76\\
&2&4\ 270.18&387\ 329.51&73\ 882.05&387\ 304.76\\
&4&4\ 263.82&387\ 329.51&73\ 882.05&387\ 304.76\\
\midrule
\fname{EXI-factbook}&0&530\ 170.92&481\ 410.05&423\ 012.12&423\ 012.12\\
&1&11\ 772.65&239\ 499.01&181\ 101.08&204\ 649.84\\
&2&5\ 049.98&239\ 499.01&181\ 101.08&204\ 649.84\\
&4&4\ 345.42&239\ 499.01&181\ 101.08&204\ 649.84\\
\midrule
\fname{EnWikiNew}&0&2\ 118\ 359.59&1\ 877\ 639.22&1\ 384\ 034.65&1\ 384\ 034.65\\
&1&243\ 835.84&1\ 326\ 743.94&833\ 139.36&1\ 095\ 837.20\\
&2&78\ 689.86&1\ 326\ 743.94&833\ 139.36&1\ 095\ 837.20\\
&4&78\ 687.51&1\ 326\ 743.94&833\ 139.36&1\ 095\ 837.20\\
\midrule
\fname{EnWikiQuote}&0&1\ 372\ 201.38&1\ 229\ 530.04&894\ 768.55&894\ 768.55\\
&1&156\ 710.30&871\ 127.39&536\ 365.91&717\ 721.09\\
&2&51\ 557.50&871\ 127.39&536\ 365.91&717\ 721.09\\
&4&51\ 557.31&871\ 127.39&536\ 365.91&717\ 721.09\\
\midrule
\fname{EnWikiVersity}&0&2\ 568\ 158.43&2\ 264\ 856.93&1\ 644\ 997.36&1\ 644\ 997.36\\
&1&278\ 832.56&1\ 594\ 969.93&975\ 110.35&1\ 311\ 929.24\\
&2&74\ 456.55&1\ 594\ 969.93&975\ 110.35&1\ 311\ 929.24\\
&4&74\ 456.41&1\ 594\ 969.93&975\ 110.35&1\ 311\ 929.24\\
\midrule
\fname{Nasa}&0&3\ 022\ 100.11&2\ 872\ 172.41&2\ 214\ 641.55&2\ 214\ 641.55\\
&1&292\ 671.36&1\ 368\ 899.76&701\ 433.91&1\ 226\ 592.72\\
&2&168\ 551.10&1\ 363\ 699.16&696\ 194.53&1\ 221\ 474.16\\
&4&147\ 041.08&1\ 363\ 699.16&696\ 194.53&1\ 221\ 474.16\\
\midrule
\fname{Shakespeare}&0&655\ 517.90&521\ 889.47&395\ 890.85&395\ 890.85\\
&1&138\ 283.88&370\ 231.89&244\ 047.64&347\ 212.36\\
&2&125\ 837.77&370\ 061.20&243\ 843.87&347\ 041.31\\
&4&123\ 460.80&370\ 057.77&243\ 838.09&347\ 037.86\\
\midrule
\fname{SwissProt}&0&18\ 845\ 126.39&16\ 063\ 648.44&13\ 755\ 427.39&13\ 755\ 427.39\\
&1&3\ 051\ 570.48&11\ 065\ 924.67&8\ 757\ 703.61&10\ 238\ 734.83\\
&2&2\ 634\ 911.88&11\ 065\ 924.67&8\ 757\ 703.61&10\ 238\ 734.83\\
&4&2\ 314\ 609.48&11\ 065\ 924.67&8\ 757\ 703.61&10\ 238\ 734.83\\
\midrule
\fname{Treebank}&0&16\ 127\ 202.92&15\ 669\ 672.80&12\ 938\ 625.09&12\ 938\ 625.09\\
&1&7\ 504\ 481.18&12\ 301\ 414.61&9\ 482\ 695.67&9\ 925\ 567.44\\
&2&5\ 607\ 499.40&11\ 909\ 330.06&9\ 051\ 186.33&9\ 559\ 968.40\\
&4&4\ 675\ 093.61&11\ 626\ 935.89&8\ 736\ 301.14&9\ 285\ 544.85\\
\midrule
\fname{USHouse}&0&36\ 266.08&34\ 369.06&28\ 381.43&28\ 381.43\\
&1&10\ 490.44&24\ 249.78&17\ 968.41&19\ 438.19\\
&2&9\ 079.97&24\ 037.34&17\ 569.59&19\ 216.99\\
&4&6\ 308.98&23\ 634.87&16\ 830.00&18\ 783.36\\
\midrule
\fname{XMark1}&0&1\ 250\ 525.41&1\ 186\ 214.34&988\ 678.93&988\ 678.93\\
&1&167\ 586.81&592\ 634.17&394\ 639.43&523\ 996.29\\
&2&131\ 057.35&592\ 625.76&394\ 565.79&523\ 969.97\\
&4&127\ 157.34&592\ 037.39&393\ 770.73&523\ 432.87
\end{tabular}
  \caption{A comparison of the upper bounds on the bits used by the data structures in~\cite{HuckeLS19} (third column) and~\cite{Ganczorz20} (columns 4, 5 and 6) where lower order terms are ignored.} \label{table1}
\end{table}

In this section we complement our theoretical results with experimental data.
We computed the entropies $H^{\deg}$, $H_k$,  $H^{\ell}_k$, $H^{\ell,\deg}_k$ and $H^{\deg,\ell}_k$ (for $k \in \{0,1,2,4\}$) for 13 XML files from XMLCompBench (\url{http://xmlcompbench.sourceforge.net}).
In Table~\ref{table1} we compare the upper bounds (ignoring lower order terms) on the bits needed by the compressed data structures from~\cite{HuckeLS19} ($H_k$; see also Theorem~\ref{theorem-isitpaper}) and~\cite{Ganczorz20} ($H^{\deg}+H^{\ell}_k$, $H^{\ell}_k+H^{\ell,\deg}_k$ and $H^{\deg}+H^{\deg,\ell}_k$; see also Theorem~\ref{theorem-ganczorzentropybounds}).
It turns out that for all XML trees used in this comparison the $k^{th}$-order label-shape entropy (for $k>0$) from~\cite{HuckeLS19} is significantly smaller than the 
entropies from~\cite{Ganczorz20}.

In Appendix~\ref{appendix-exp} (Table~\ref{table2}) the reader finds all tree entropy measures discussed in this paper for each XML (divided by the tree size so that the table fits on the page).
Additionally, we computed the label-shape entropy $H_k$ for a modified version of each XML where all labels are replaced by a single dummy symbol, i.e., we considered the underlying, unlabeled tree as well (in Table~\ref{table2} this value is denoted by $H'_k$).
Note again that the label-shape entropy is the only measure where this modification is interesting because (i) the degree entropy $H^{\deg}$ is not affected since it does not take labels into account and
(ii) we have $H^{\ell,\deg}_k(t)=H^{\deg}(t)$ and $H^{\ell}_k(t)=H^{\deg,\ell}_k(t)=0$ for all unlabeled trees $t$  and for all $k$.
In the setting of unlabeled trees, our experimental data indicates that neither the label-shape entropy nor the degree entropy (which is the upper bound on the number of bits needed by the data structure in~\cite{JanssonSS12} ignoring lower order terms; see also Theorem~\ref{theorem-degentropybound}) is favorable.

\section{Open problems}

The separation between $H_k$ and $H_1^\ell + H^{\deg}$ 
achieved in Lemma~\ref{lemma-vergleich-last} is quite weak: for a constant $k$, $H_k$ is only by a logarithmic factor larger than 
$H_1^\ell + H^{\deg}$; see \eqref{eq-separation}. In contrast, in Lemmas~\ref{lemma-vergleich1} and~\ref{lemma-vergleich2} we achieved an exponential separation. It remains
open, whether such an exponential separation is also possible for $H_k$ and $H_1^\ell + H^{\deg}$. In other words,
does there exist a family of trees $t_n$ such that $H_k(t_n) \in \Omega(n)$ and $H_1^\ell(t_n) + H^{\deg}(t_n) \in \mathcal{O}(\log n)$?

\bibliographystyle{plain}
\bibliography{bib}

\newpage
\appendix
\section{Additional experimental data} \label{appendix-exp}
The following table shows the entropy measures discussed in this work for real XML data.\footnote{We want to remark that in~\cite{Ganczorz20} the values $H^{\deg}/n$, $H^{\ell}_k/n$, $H^{\deg,\ell}_k/n$ and $H^{\ell,\deg}_k/n$ are computed for \fname{EnWikiNew.xml}, \fname{Nasa.xml} and \fname{Treebank.xml}, but for the documents \fname{EnWikiNew.xml} and \fname{Nasa.xml} the presented values are incorrect.}
\begin{table}[h!]
\centering\tiny
    \begin{tabular}{l || R{1.2cm} | R{1.2cm} | R{0.5cm} | R{1.2cm} | R{1.2cm} | R{1.2cm} | R{1.2cm} | R{1.2cm}}
    XML             & $n$  & $H^{\deg}/n$&$k$& $H'_k/n$&$H_k/n$ & $H^{\ell}_k/n$ & $H^{\deg,\ell}_k/n$ & $H^{\ell,\deg}_k/n$   \\
    \midrule
    \midrule
    \fname{BaseBall}&28\ 306&0.2777&0&2.0000&7.1564&5.1563&4.8826&0.0039\\
&&&1&0.5271&0.2243&4.8698&4.8563&0.0039\\
&&&2&0.5218&0.0944&4.8698&4.8563&0.0039\\
&&&4&0.5122&0.0507&4.8698&4.8563&0.0039\\
\midrule
\fname{DBLP}&3\ 332\ 130&0.7543&0&2.0000&5.6203&3.6203&3.1373&0.2714\\
&&&1&0.9343&0.7826&2.8881&2.8701&0.2713\\
&&&2&0.9064&0.6231&2.8881&2.8700&0.2713\\
&&&4&0.8340&0.5856&2.8881&2.8700&0.2713\\
\midrule
\fname{EXI-Array}&226\ 523&1.4022&0&2.0000&4.8484&2.8484&1.4647&0.0185\\
&&&1&1.9736&0.0189&0.3077&0.3076&0.0185\\
&&&2&1.8227&0.0189&0.3077&0.3076&0.0185\\
&&&4&0.4486&0.0188&0.3077&0.3076&0.0185\\
\midrule
\fname{EXI-factbook}&55\ 453&1.1207&0&2.0000&9.5607&7.5607&6.5076&0.0676\\
&&&1&1.2641&0.2123&3.1983&2.5698&0.0676\\
&&&2&1.2319&0.0911&3.1983&2.5698&0.0676\\
&&&4&1.1811&0.0784&3.1983&2.5698&0.0676\\
\midrule
\fname{EnWikiNew}&404\ 652&1.4051&0&2.0000&5.2350&3.2350&2.0152&0.1853\\
&&&1&1.6514&0.6026&1.8736&1.3030&0.1853\\
&&&2&1.3977&0.1945&1.8736&1.3030&0.1853\\
&&&4&1.0771&0.1945&1.8736&1.3030&0.1853\\
\midrule
\fname{EnWikiQuote}&262\ 955&1.4574&0&2.0000&5.2184&3.2184&1.9453&0.1844\\
&&&1&1.6878&0.5960&1.8554&1.2720&0.1844\\
&&&2&1.4695&0.1961&1.8554&1.2720&0.1844\\
&&&4&1.0229&0.1961&1.8554&1.2720&0.1844\\
\midrule
\fname{EnWikiVersity}&495\ 839&1.3883&0&2.0000&5.1794&3.1794&1.9293&0.1382\\
&&&1&1.6647&0.5623&1.8284&1.2576&0.1382\\
&&&2&1.4106&0.1502&1.8284&1.2576&0.1382\\
&&&4&0.9645&0.1502&1.8284&1.2576&0.1382\\
\midrule
\fname{Nasa}&476\ 646&1.6855&0&2.0000&6.3403&4.3403&2.9608&0.3060\\
&&&1&1.8834&0.6140&1.1865&0.8879&0.2851\\
&&&2&1.8483&0.3536&1.1756&0.8772&0.2850\\
&&&4&1.3824&0.3085&1.1756&0.8772&0.2850\\
\midrule
\fname{Shakespeare}&179\ 690&1.2563&0&2.0000&3.6480&1.6480&0.9468&0.5551\\
&&&1&1.3713&0.7696&0.8040&0.6759&0.5541\\
&&&2&1.2713&0.7003&0.8031&0.6750&0.5539\\
&&&4&1.1215&0.6871&0.8031&0.6750&0.5539\\
\midrule
\fname{SwissProt}&2\ 977\ 031&1.0657&0&2.0000&6.3302&4.3302&3.5548&0.2903\\
&&&1&1.2108&1.0250&2.6514&2.3736&0.2903\\
&&&2&1.0730&0.8851&2.6514&2.3736&0.2903\\
&&&4&1.0553&0.7775&2.6514&2.3736&0.2903\\
\midrule
\fname{Treebank}&2\ 437\ 666&1.8123&0&2.0000&6.6158&4.6158&3.4955&0.6920\\
&&&1&1.9707&3.0786&3.2341&2.2594&0.6560\\
&&&2&1.8014&2.3004&3.0732&2.1095&0.6398\\
&&&4&1.7620&1.9179&2.9574&1.9969&0.6265\\
\midrule
\fname{USHouse}&6\ 712&1.7175&0&2.0000&5.4032&3.4030&2.5109&0.8254\\
&&&1&1.8263&1.5629&1.8954&1.1785&0.7817\\
&&&2&1.5810&1.3528&1.8637&1.1456&0.7539\\
&&&4&1.2958&0.9400&1.8038&1.0810&0.7037\\
\midrule
\fname{XMark1}&167\ 865&1.6169&0&2.0000&7.4496&5.4496&4.2728&0.4401\\
&&&1&1.6917&0.9983&1.9135&1.5046&0.4374\\
&&&2&1.6820&0.7807&1.9135&1.5045&0.4370\\
&&&4&1.5735&0.7575&1.9100&1.5013&0.4358
\end{tabular}
  \caption{Experimental results for XML tree structures, where $n$ denotes the number of nodes and $H'_k$ is the label-shape entropy for the underlying, unlabeled tree.}  \label{table2}
\end{table}

\end{document}